\DeclareTextSymbolDefault{\DH}{T1}
\newtheorem{defi}{Definition}
\newtheorem{thm}{Theorem}
\newtheorem{corollary}[thm]{Corollary}
\newtheorem{lem}[thm]{Lemma}
\newtheorem{example}{Example}
 \newcommand{\complex}{\mathbb{C}} 
\newcommand{\rank}[1]{\mathrm{rank} (#1)} \newcommand{\range}[1]{\mathcal{R} (#1)}
\def\symbtype{1} \ifcase\symbtype%
\newcommand{\adj}[1]{{#1}^{*}}
\newcommand{\conj}[1]{\overline{#1}}
\DeclarePairedDelimiterX\bra[1]{}{}{\adj{#1}}
\DeclarePairedDelimiter\ket{}{}
\DeclarePairedDelimiterX\braket[2]{}{}{(#1, #2)}
\DeclarePairedDelimiterX\mket[2]{}{}{#1\otimes#2}
\newcommand{\adj}[1]{{#1}^{\dagger}}
\newcommand{\conj}[1]{{#1}^{*}}
\DeclarePairedDelimiter\bra{\langle}{\rvert}
\DeclarePairedDelimiter\ket{\lvert}{\rangle}
\DeclarePairedDelimiterX\mket[2]{\lvert}{\rangle}{#1,#2}
\DeclarePairedDelimiterX\braket[2]{\langle}{\rangle}{#1 \delimsize\vert#2}
\newcommand{\sijkl}[2]{S^1_{#1,#2}}
\newcommand{\sijklt}[2]{S^2_{#1,#2}}
\newcommand{\commutes}{\; \substack{\mathrm{commutes}\\
    \xrightarrow{\hspace*{6em}}\\ \mbox{}}\; }
\newcommand\lijkl[2]{\lambda^{1}_{#1 #2}}
\newcommand\lijklt[2]{\lambda^{2}_{#1 #2}}
\newcommand\Lijkl[1]{\Lambda^{1}_{#1 }}
\newcommand\Lijklt[1]{\Lambda^{2}_{#1
  }}
\newcommand{\notprop}{\propto\kern-1\@ptsize pt \diagup}
\renewcommand{\intercal}{\mathsf{T}}
\renewcommand\tilde[1]{\stackrel{\sim}{\smash{{#1}}\rule{0pt}{1.1ex}}}
\newcommand{\I}{\mathds 1}
 \preprint{APS/123-QED}
\begin{document}
\title{Separability of Multipartite Quantum States with Strong Positive Partial Transpose } \date{}

\begin{abstract}
  We generalize the definition of strong positive partial transpose (SPPT) to the multipartite system. The tripartite
  case was first considered by X.-Y. Yu and H. Zhao [ Int. J. Theor. Phys.,
  \textbf{54}, 292, (2015)]. In this extension, unfortunately, desired properties
  such as the PPT of SPPT states and  the separability of super  and pure SPPT  states  are not preserved.
  In contrast, this paper provides an alternative  generalization to multipartite cases  with these properties
  preserved. 
    We also provide sufficient conditions for the  separability of SPPT states.
\end{abstract}
\author{Lilong Qian}
 \pacs{03.65.Ud, 03.67.Mn}
 \email{Electronic adress: qian.lilong@u.nus.edu}
  \affiliation{Department of Mathematics, National University of Singapore, 119076, Singapore}
\date{\today}
\maketitle

\section{Introduction}
Entanglement lies at the heart of quantum computation and information theory, which is the resource of most
applications in quantum information processing tasks.  Since 1935, when the necessarily nonlocal nature of quantum mechanics was
first highlighted by Einstein, Podolsky and Rosen (EPR)~\cite{einstein1935can}, quantum entanglement has become a major
quantum phenomenon  which requires further understanding. One of the fundamental problems about quantum entanglement is the
 separability problem i.e.\@ to check whether a given quantum state is separable or entangled.  Given a density
matrix $\rho$ in a quantum bipartite system $A:B$,  $\rho$ is  said to be separable if it can be written as a convex combination
of product states~\cite{Werner1989},  i.e. $\rho$ is separable if
\begin{align}
  \label{intro}
  \rho=\sum_i p_i\rho_i^A\otimes\rho_i^B,\;\sum_i p_1 = 1,p_i\geqslant 0,
\end{align}
where $\rho_i^A$ and $\rho^B_i$ are the density matrices in subsystems A and B respectively. A quantum state is said to be
quantum entangled if the density matrix does not possess a decomposition of the form as \cref{intro}. Unlike
separable states, entangled states cannot be obtained by preparing their subsystems~\cite{Peres1996a}.

Despite  remarkable efforts over recent years, the operational necessary and sufficient condition for the
separability  still remains unknown in general. It has been found that the separable problem is NP-HARD even for
the bipartite system~\cite{gurvits2003classical}.

While it is hard to solve this problem in general, there are plenty of  practical criteria which enable us to detect
entanglement for some sub-classes.  One of the most famous criteria  is called positive
partial transpose (PPT)  or Peres-Horodecki criterion~\cite{Peres1996a}.  It tells  that if a state
$\rho$ is separable, then its partial transposed state $\rho^{\intercal_{A}} = (\intercal \otimes\I)\rho$ remains
positive.
Using positive
maps, Horodecki {\sl et al.}~\cite{horodecki1996separability} showed that Peres-Horodecki criterion is also sufficient
for $2\otimes 2$ and $2\otimes 3$  systems. It is, however, not true for higher dimensional spaces.
Woronowicz~\cite{Woronowikz1976} constructed a counterexample of a $2\otimes 4$ entangled PPT state. See more entangle
PPT states in Refs.~\cite{Choi1982,Strmer1963,Horodecki1997}.  Utilizing matrix analysis, Kraus {\sl et \
  al.}~\cite{Kraus2000} showed that any $M\otimes N$ PPT state of rank $N$ is separable. Moreover, some generalized
results  are proposed in Refs.~\cite{Horodecki2000,Karnas2001,Fei2003}. Since any $M\otimes N$ state of rank less than
$N$ is distillable~\cite{horodecki2003rank}, it suffices to consider these state whose rank is greater than its local
ranks.

A subclass of PPT states, namely strong PPT (SPPT) states, were first considered by  {Chru\'sci\'nski} {\sl
  et al.}~\cite{Chruscinski2008}. These states have a ``strong PPT property''. Based on several examples, it was
conjectured that SPPT states are separable. Unfortunately, this conclusion fails for  $M\otimes N$ PPT states when
$NM\geq 9$. Actually, all $2\otimes 4$ SPPT states are separable~\cite{Ha2012}. But, there exists a $2\otimes 5$ SPPT
which is entangled~\cite{Ha2012}. The separability of SPPT states become more complex in high dimensional spaces. The
SPPT states encompass many previously known separable PPT states such as rank $N$ states of $2\otimes N$
system. Moreover, it is proved that SPPT states can be used to witness quantum discord (QD) in $2\otimes N$
systems~\cite{Bylicka2010}.  In addition, Bylicka {\sl et al.}~\cite{Bylicka2013} constructed  a special class of SPPT
states, which were called super strong SPPT (SSPPT) states.  In Ref.~\cite{Guo2012}, the
decomposition of SSPPT states was considered in both finite and infinite dimensional systems.

In a recent paper\cite{SPPT3partite}, the idea of SPPT states was generalized to the tripartite system
$A_{1}:A_{2}:A_{3}$. However, these states are essentially  bipartite SPPT  with respect to the bi-partition
$A_{1}A_{2}:A_{3}$. As a result, some good properties may be lost in the tripartite sense. For instance, 
the SPPT cannot guarantee PPT, which is one of the
most important features for SPPT states in the bipartite system.
Also, the super SPPT cannot  guarantee the separability in general.
Therefore, it would be especially interesting to find a more appropriate generalization to tripartite or even
multipartite systems.  The purpose of this paper is to provide an alternative definition of SPPT states in the $n$-particle
system. We begin with the simplest  $2\otimes 2 \otimes N $ case,  and eventually, extend  to general many-body
system.

The remainder of this paper is organized as follows.  In Section~\ref{sec:preliminaries}, we present some preliminaries
about the separability problem of the SPPT states. In Section~\ref{sec:old-version-sppt}, we recall the definition of
tripartite SPPT states in Ref.~\cite{SPPT3partite}. We showed that the defined SPPT and SSPPT cannot inherit many good
properties as those in the bipartite systems.  In Section~\ref{sec:mySPPT}, we provide a new idea to define the SPPT and
SSPPT state in $2\otimes 2 \otimes N$ system.  We extend this concept to
$N_1\otimes N_2\otimes N_3$ case. Finally, we show  the idea  to the arbitrary multipartite system.
In Section~\ref{sec:suff-cond-sppt}, we  propose
some  sufficient conditions for separability of SPPT states. Some concluding remarks are given in \Cref{conclusion}.

\section{preliminaries}
\label{sec:preliminaries}
We start this section with a formal definition of separability.  Consider a $d$-particle state belonging to a Hilbert
space $\mathcal{H}$. Denote by $A_1$, $A_2$, $\cdots$, $A_d$ the subsystems respectively.  Each subsystem is a Hilbert
space $\mathcal{H}_i$ with dimension $N_i$. By the postulate for composition of system in quantum computation thery, we have
$\mathcal{H} = \otimes_{i=1}^d \mathcal{H}_i$ and $\dim(\mathcal{H})=\prod_{i=1}^{d}N_i$.

To make more concise, we use vector based indexes in this paper. Let $\mathcal{I}$ be the set consisting of d-tuples
$ (i_1,i_2,\ldots,i_d),1\leqslant i_k\leqslant N_k$. Given a index $\alpha= (i_1,i_2,\ldots,i_d)\in \mathcal{I}$, then
$x_{\alpha}$ represents $x_{i_1,i_2,\ldots,i_{d}}$. Furthermore, we also assign an order for those indexes. Let
\[
  n(\alpha)= \prod_{k=1}^{d-1}(\prod_{l=k+1}^{d}N_k)(\alpha_k-1)+\alpha_d,
\]
for all $\alpha\in\mathcal{I}$. Then we say $\alpha\leqslant \beta$ if $n(\alpha)\leqslant n(\beta)$ for
$\alpha,\beta\in \mathcal{I}$.  Moreover, $\ket{\alpha} $ represents the product vector $\ket{i_1,i_2,\ldots,i_d}$.

Hence a density matrix acting on the space $\mathcal{H}$ can be represented as
\begin{equation}
  \label{eq:37}
  \rho = \sum_{\alpha,\beta\leqslant \alpha_0}  \rho_{\alpha,\beta}| \alpha\rangle \langle \beta|, 
\end{equation}
where $\alpha,\beta\in\mathcal{I}$, $\alpha_0 = (N_1,N_2,\ldots,N_d)$.

Now we recall the definition of separability of a quantum state. A density matrix $\rho$ in $\mathcal{H}$ is said to be
separable if it can be written as
\begin{equation}
  \label{eq:12}
  \rho = \sum_{i=1}^{L} \lambda_i\ket{x_i}\bra{x_i},
\end{equation}
where $\sum_{i}^{L}\lambda_i = 1,\lambda_i\geqslant 0$ and each $\ket{x_{i}}$ is a pure product vector in the space
$\mathcal{H}$.

Peres-Horodecki criterion plays a crucial role in the separability problem, which is based on the partial transpose.
Therefore it would be necessarily to introduce the notations of partial transposes ahead of time. Let $\rho$ be a given
state in the composite system $A_1:A_2$. Denote by $\intercal$ the usual transpose operator.  Then the composite
operators $(\I\otimes \intercal)$ and $(\intercal\otimes \I)$ are called the partial transpose operators. Furthermore,
the partial transposed density matrices are denoted by $ \rho^{\intercal_2}=(\I\otimes \intercal)\rho$ and
$\rho^{\intercal_1}=(\intercal\otimes \I)\rho $.  For general $d$-particle system $A_1:A_2\cdots:A_d$, we denote by
$\intercal_i$ ( $i=1,2,\ldots,d$) the partial transpose with respect to $i$-th subsystem respectively. The corresponding
partial transposed state is denoted by $\rho^{\intercal_i}$. Generally, given an index set $I = \{i_1,i_2,\ldots,i_k\}$,
$\intercal_I$ denotes the partial transpose with respect to the subsystems in $I$, that is
\[ \intercal_{I}= \circ_{k\in I}\intercal_k.\] In the $d$-body system, we introduce a special partial transpose,
\begin{equation}
  \label{eq:54}
  \Gamma_k =   \underbrace{\intercal\otimes \intercal\otimes \cdots \otimes \intercal}_{k \intercal}
  \otimes
  \underbrace{\I\otimes \cdots \otimes \I}_{(d-k) \intercal},
\end{equation}
which will be used in the following sections.

PPT  criterion tells  that if $\rho$ is separable, then
\begin{equation}
  \label{multiSPPTv0.6:eq:1}
  \intercal_{\!I} \cdot \rho\geqslant 0,
\end{equation}
for any $I\subset \{1,2,\ldots,d\} $.

Now we recall the definition of SPPT in the bipartite system.

Consider a density matrix $\rho$ in $N_1\otimes N_2$ system with a block Cholesky decomposition $\rho = \adj X X$,
\begin{equation}
  \label{eq:40}
  \begin{split}
    X& =
    \begin{pmatrix}
      X_{11}&\cdots&X_{1N_1}\\
      \vdots& \ddots &\vdots\\
      X_{N_1 1}&\cdots &X_{N_1 N_1} \\
    \end{pmatrix}\\
    &=
    \begin{pmatrix}
      S_{11}X_1 &\cdots&S_{1N_1}X_1\\
      \vdots&\ddots& \vdots\\
      S_{N_1 1}X_{N_1}&\cdots&S_{N_1N_1}X_{N_1}\\
    \end{pmatrix}\\
    & =
    \begin{pmatrix}
      X_1 &\cdots&S_{1N_1}X_1\\
      \vdots&\ddots& \vdots\\
      0&\cdots&X_{N_1}\\
    \end{pmatrix},
  \end{split}
\end{equation}
where $S_{ij}$ and $X_i$ are both $N_1\times N_1$ matrix with
\begin{equation*}
  S_{ij} =
  \begin{cases}
    \I_{N_2}, & \text {if } i=j;\\
    0, & \text{ if }i > j.
  \end{cases}
\end{equation*}
In this paper, $\I_n$ denotes the identity operator acting on the space $\complex^n$.
\begin{defi}
  Let $\rho$ be a density matrix in $N_1\otimes N_2$ system. And $\rho = \adj X X$, where $X$ has the form as
  \cref{eq:40}. Then $\rho$ is said to be SPPT if
  \begin{equation}
    \label{eq:41}
    \rho^{\intercal_1} = \adj Y Y,
  \end{equation}
  with
  \begin{equation*}
    Y =
    \begin{pmatrix}
      &&\\
      &\adj S_{ij}X_i&\\
      &&\\
    \end{pmatrix},
  \end{equation*}
  or equivalently,
  \begin{equation}
    \label{eq:42}
    \sum_{k=1} ^{N_1} \adj X_k [\adj S_{kj},S_{ki}]X_k = 0,\;1\leqslant i \leqslant j\leqslant N_1.
  \end{equation}
\end{defi}

Here the commutator  $[A,B]$ is defined by $[A,B]= AB-BA$.

In particular, \cref{eq:42} is naturally satisfied if
\begin{equation}
  \label{eq:43}
  [\adj S_{kj},S_{ki}]=0,1\leqslant i \leqslant j\leqslant N_1.
\end{equation}
We call this subclass of SPPT states super SPPT (SSPPT) states~\cite{Chruscinski2008}. It was proved that every SSPPT
state is separable~\cite{Bylicka2013}.

\section{Previous definition of tripartite SPPT states}%
~\label{sec:old-version-sppt}
In this section, we will first introduce the definition of tripartite SPPT states in Ref.~\cite{SPPT3partite}.
After that, we will show that the SPPT states will not preserve some good properties as that in the bipartite system. For
example, the SPPT state may not be PPT.\@ Besides, pure or super SPPT states may not be separable.

Suppose that $\rho$ is a density matrix in the tripartite system  $A_1:A_2:A_3$, with a
decomposition $\rho = \adj X X$. Under the bi-partition $A_1:A_2A_3$, $X$ can be written as an $N_1\times N_1$ block matrix:
\begin{equation}
  \label{eq:44}
  X =
  \begin{pmatrix}
    &&\\
    &Z_{ij}\\
    &&\\
  \end{pmatrix},\;Z_{ij}\in M_{N_2N_3}.
\end{equation}
Again, each $Z_{ij}$ can be  written as a $N_2\times N_2$ block matrix:
\begin{equation}
  \label{eq:45}
  Z_{ij} =
  \begin{pmatrix}
    & &\\
    &Z_{ij,kl}&\\
    & &\\
  \end{pmatrix}
  =
  \begin{pmatrix}
    &&\\
    &S_{ijkl}X_{ik}&\\
    &&
  \end{pmatrix},
\end{equation}
where
\begin{equation}
  \label{eq:46}
  S_{ijkl} =
  \begin{cases}
    \I_{N_3}, & (i,k) = (j,l);\\
    0,& (i,k) > (j,l).
  \end{cases}
\end{equation}
Note that the order $(i,k)\leqslant (j,l)$ means $(i-1)N_1+k\linebreak\leqslant (j-1)N_1+l$, which is the order we have defined in
the previous section.
Let $\alpha = (i,k),\beta = (j,l)$, then the decomposition can be written in a conciser form,
\begin{equation}
   \label{eq:1}
 \rho = \adj  X X,\; X =
  \begin{pmatrix}
    &&\\
    &S_{\alpha\beta}X_{\alpha}&\\
    &&
  \end{pmatrix}.
\end{equation}

Recall the definition of SPPT state in Ref.~\cite{SPPT3partite}:
\begin{defi}
  Let $\rho=\adj XX$ be a density matrix in the tripartite system $A_1:A_2:A_3$ with $X$ being  the form as \cref{eq:1}. Then
  $\rho$ is said to be  SPPT if
  \begin{equation}
    \label{eq:4}
    \rho^{\intercal_{12}} = \adj{Y} Y,
  \end{equation}
  where
\begin{equation}
  \label{eq:3}
  \begin{split}
    Y & =
    \begin{pmatrix}
      &&\\
      &\adj S_{\alpha\beta}X_{\alpha}&\\
      &&\\
    \end{pmatrix}.
  \end{split}
\end{equation}
\end{defi}

Note that in the above definition, the condition~\eqref{eq:4} is equivalent to
\begin{equation}
  \label{eq:5}
  \sum_{\alpha\leqslant \gamma_1} \adj{X_{\alpha}}\left[S_{\alpha,\beta},\adj{S_{\alpha,\beta'}}\right] X_{\alpha}=0,
\end{equation}
where  $\gamma_0 = (1,1)$, $\gamma_1  = (N_1,N_2)$,  and $\gamma_0 \leqslant\beta\leqslant\beta'\leqslant \gamma_1$.

Similarly, super SPPT (SSPPT) are also defined for tripartite system in Ref.~\cite{SPPT3partite}.
\begin{defi}
  Let $\rho$ be the SPPT state with a decomposition of form as \cref{eq:1}, then $\rho$ is SSPPT if
  \begin{equation}
    [S_{\alpha,\beta},\adj S_{\alpha,\beta'}] = 0, \; \gamma_0\leqslant\alpha\leqslant \beta \leqslant \beta' \leqslant \gamma_1.
  \end{equation}
\end{defi}

Note that $\alpha$ refers to the $n(\alpha)$-th row and $\beta$ refers to the $n(\beta)$-th column. Therefore, if we
reorder the 2-tuple $\alpha,\beta$ by a single index $n(\alpha)$ and $n(\beta)$ respectively, then the definition of
SPPT will be identical to that in the bipartite system. This implies that  some  properties  of SPPT states may hold only under the
bi-partition $A_1A_2:A_3$. 

 PPT, as is well-known,  is one of the most important features for SPPT states. However, the tripartite SPPT states
 defined here may lose this property. Here we construct an example show this defect.

Let
\begin{equation*}
 \rho = \adj v v,\; v = (1,0,0,0,0,0,1,0)
\end{equation*}
be the density matrix in $2\otimes  2\otimes 2$ system and
\begin{equation}
  \label{multiSPPTv0.6:eq:14}
  X_{11} =
  \begin{pmatrix}
    1&0\\
    0&0
  \end{pmatrix}.
\end{equation}
Then $\rho$ can be written as
\begin{equation}
  \label{multiSPPTv0.6:eq:15}
  \rho = \adj X X,X =
  \begin{pmatrix}
    X_{11} & 0 & 0 & X_{11}\\
    0&0&0&0\\
    0&0&0&0\\
    0&0&0&0
  \end{pmatrix}.
\end{equation}
Put
\begin{equation}
  \label{multiSPPTv0.6:eq:16}
  S_{\alpha,\beta} =
  \begin{cases}
    \I , &\alpha=\beta = (1,1),\\
    0,&\text{otherwise}.
  \end{cases}
\end{equation}
Therefore $\rho$ satisfies the condition of SSPPT.\@ However, $\rho$ is
not a separable state. Since any pure state is separable if and only if it is PPT, $\rho$ is consequently  not a PPT
state as well. This example illustrates that SPPT cannot guarantee PPT with this definition. Moreover, unlike the
bipartite case, super and pure SPPT states may not be separable.

Recall the  Theorem 1 of Ref.~\cite{SPPT3partite}:
\begin{thm}
  If $\rho$ us a super SPPT state in $N_1\otimes N_2\otimes N_3$ systems, then $\rho$ is bi-separable.
\end{thm}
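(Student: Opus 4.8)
The plan is to reduce the statement to the bipartite result that every SSPPT state is separable~\cite{Bylicka2013}, exploiting the observation (already stressed in the preceding discussion) that the tripartite definition of~\cite{SPPT3partite} coincides with the bipartite SPPT structure taken across the single cut $A_1A_2:A_3$.

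First I would make this reduction explicit. Treat the pair $\alpha=(i,k)$ as a single label through the order $n(\alpha)$, so that $A_1A_2$ becomes one party of dimension $M:=N_1N_2$ and $A_3$ a second party of dimension $N:=N_3$. Under this relabeling the block matrix $X$ of~\cref{eq:1} takes exactly the bipartite block-Cholesky shape of~\cref{eq:40}: the blocks $X_\alpha$ are $N\times N$ matrices acting on the $A_3$ factor, while $S_{\alpha\beta}$ is block upper-triangular with $S_{\alpha\alpha}=\I_{N}$ and $S_{\alpha\beta}=0$ for $n(\alpha)>n(\beta)$, by~\cref{eq:46}. Thus $\rho=\adj XX$ is, after reordering, an ordinary bipartite state on $\complex^{M}\otimes\complex^{N}$ presented in Cholesky form.

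Next I would match the defining conditions and conclude. The tripartite SSPPT constraint $[S_{\alpha,\beta},\adj S_{\alpha,\beta'}]=0$, imposed for $\gamma_0\leqslant\alpha\leqslant\beta\leqslant\beta'\leqslant\gamma_1$, becomes under $\alpha\mapsto n(\alpha)$ precisely the bipartite SSPPT condition~\cref{eq:43}: the index ranges $\alpha\leqslant\beta\leqslant\beta'$ map onto $k\leqslant i\leqslant j$, and the two commutators agree up to the antisymmetry $[A,B]=-[B,A]$, which does not affect vanishing. Hence the reordered state is a bipartite SSPPT state, so by~\cite{Bylicka2013} it admits a decomposition $\rho=\sum_i p_i\,\sigma_i\otimes\tau_i$ with $\sigma_i$ a state on $\complex^{M}=\mathcal H_1\otimes\mathcal H_2$ and $\tau_i$ a state on $\complex^{N}=\mathcal H_3$. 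This is exactly separability across $A_1A_2:A_3$, i.e. bi-separability.

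The main obstacle I anticipate is organizational rather than mathematical: one must verify that a single ordering is used consistently to collapse the $2$-tuples into scalar indices in both $X$ and $S$, so that upper-triangularity is genuinely inherited and the commutator constraints transfer with the correct ranges. I would also flag the conceptual caveat that the conclusion is separability with respect to the one cut $A_1A_2:A_3$ only, and not full tripartite separability—precisely the shortcoming this paper later sets out to repair.
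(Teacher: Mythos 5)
Your proposal is correct and is essentially the paper's own reasoning: the paper justifies this recalled theorem precisely by noting that relabeling the pairs $(i,k)$ by the single index $n(\alpha)$ turns the tripartite definition into the bipartite SPPT/SSPPT structure across the cut $A_1A_2:A_3$, whence separability follows from the bipartite result of Ref.~\cite{Bylicka2013}. Your closing caveat — that this yields separability only across that one cut — is exactly the point the paper makes with its counterexample $\rho=\adj{v}v$, $v=(1,0,0,0,0,0,1,0)$, against the stronger claim in Ref.~\cite{SPPT3partite} that SSPPT states are separable across all three bi-partitions.
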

It  pointed out in the proof that  tripartite SSPPT state is bi-separable with respect
to any of the  bi-partitions $A_1A_2:A_3$, $A_1:A_2A_3$ and $A_1A_3:A_2$. However this may not be true. Again, consider the example $\rho = \adj v
v$. It is  super SPPT by definition. But it is not separable under the bi-partition  $A_1:A_2A_3$ and $A_1A_3:A_2$.

In the next section, we will define our SPPT
and SSPPT states in another way. It turns out  many good properties will be preserved. 

\section{SPPT states in  multipartite case}%
\label{sec:mySPPT}%
In this section, we provide a new idea to define the tripartite SPPT states. We begin with the simplest case $2\otimes 2
\otimes N$, then we extend the idea to general $N_1\otimes N_2\otimes N_3$ tripartite system. Lastly, we give the
definition of SPPT in the arbitrary multipartite system. Correspondingly, the SSPPT states are also defined, which turn
out to be separable. In addition, we give some examples of SPPT states, which may be helpful to  shed new lights on
understanding the structure of PPT states in multipartite system.
\subsection{SPPT states in \texorpdfstring{$2\otimes 2 \otimes N$}{TEXT} system}
We begin with considering the simplest case when $\rho$ is a density matrix in $2\otimes 2 \otimes N$ system.

Let
\begin{equation}
  \label{eq:15}
  X =
  \begin{pmatrix}
    X_1 & SX_1\\
    0 & X_2
  \end{pmatrix}, Z =
  \begin{pmatrix}
    X_1 &\adj SX_1\\
    0 & X_2
  \end{pmatrix},
\end{equation}
where
\begin{align*}
  X_1 =
  \begin{pmatrix}
    Y_{11} & T_1 Y_{11}\\
    0 & Y_{12}
  \end{pmatrix}, X_2 =
        \begin{pmatrix}
          Y_{21} & T_{2} Y_{21}\\
          0 & Y_{22}
        \end{pmatrix}.
\end{align*}
Here $S$ is a diagonal block matrix
\begin{equation*}
  S=
  \begin{pmatrix}
    S_1 & 0\\
    0& S_2
  \end{pmatrix}.
\end{equation*}
Hence, $X$ can  be written as a $4\times 4$ block matrix with each block being a $N\times N$ matrix
\begin{equation}
  \label{eq:16}
  X =
  \begin{pmatrix}
    Y_{11} & T_1 Y_{11} & S_1Y_{11} & S_1T_1 Y_{11}\\
    0 & Y_{12} & 0 & S_2Y_{12}\\
    0 & 0& Y_{21} & T_2 Y_{21}\\
    0 & 0 & 0& Y_{22}
  \end{pmatrix}.
\end{equation}
Let
\begin{equation*}
  W =
  \begin{pmatrix}
    Y_{11} & \adj T_1 Y_{11} & \adj S_1Y_{11} & \adj{(S_1T_1)} Y_{11}\\
    0 & Y_{12} & 0 & \adj S_2Y_{12}\\
    0 & 0& Y_{21} & \adj T_2 Y_{21}\\
    0 & 0 & 0& Y_{22}
  \end{pmatrix}.
\end{equation*}

Now we are ready to define  SPPT states in the $2\otimes 2\otimes N$ system with the above notations.
\begin{defi}
  Let $\rho = \adj X X$ be a density matrix in  $2\otimes 2\linebreak\otimes N $ system where $X$ has the  form as \cref{eq:16}. Then
  $\rho$ is said to be SPPT if 
  \begin{equation}
    \label{eq:17}
    \rho^{\intercal_1} = \adj Z Z ,
  \end{equation}
  and
  \begin{equation}
    \label{eq:18}
    \rho^{\intercal_{12}} = \adj W W.
  \end{equation}
\end{defi}
Alternatively, the above two conditions in the definition of SPPT can be reformulated as
\begin{equation}
  \label{eq:19}
  \left\lbrace
    \begin{aligned}
     & \adj Y_{11} [T_1,\adj T_1] Y_{11} &= 0,\\
      &\adj Y_{11} [ S_1,\adj S_1] Y_{11} & =0,\\
      &\adj Y_{11}[T_1,\adj S_1] Y_{11} & = 0,\\
      &\adj Y_{11} [ S_1,\adj T_1\adj S_1] Y_{11} &  = 0,\\
      &\adj Y_{11} [ T_1,\adj T_1\adj S_1] Y_{11} & = 0,\\
      &\adj Y_{11} [ S_1T_1,\adj{(S_1T_1)}] Y_{11} \\
      &\quad- \adj Y_{12}[ \adj S_2,S_2]Y_{12} + \adj Y_{21} [ \adj T_2,T_2] Y_{21}&=0,\\
      &\adj Y_{11} [ S_1,\adj S_1] T_1Y_{12} & = 0,\\
      &\adj Y_{12} \adj T_1[S_2,\adj S_2 ] T_1 Y_{12}& = 0.
    \end{aligned}
  \right.
\end{equation}

Note that conditions~\eqref{eq:17} and~\eqref{eq:18} guarantee a ``strong PPT property''. This is one of the most different aspects compared with definition in previous section. 
Similarly, we can define a subclass of SPPT states which satisfy condition~\eqref{eq:19} automatically.
\begin{defi}
  Suppose that $\rho$ is an SPPT state with the   decomposition of the form as \cref{eq:16}. Then $\rho$ is called SSPPT if
  \begin{equation}
    \label{eq:20p}
    \left\lbrace
      \begin{array}{rl}%
        [S_i,\adj S_i] & = 0,i=1,2,\\
        {[}T_i,\adj T_i] &= 0,i=1,2,\\
        {[}S_1,\adj T_1] & = 0.\\
      \end{array}
    \right.
  \end{equation}
\end{defi}
 In the following theorem, we show that SSPPT can guarantee the separability.
\begin{thm}
  All SSPPT states in the tripartite system $2\otimes 2\otimes N$ are separable.
\end{thm}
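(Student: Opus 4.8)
The plan is to produce an explicit separable decomposition by exploiting the SSPPT commutation relations~\eqref{eq:20p} to simultaneously normal-ize the building blocks $S_1,S_2,T_1,T_2$. The key observation is that each of these is a normal matrix (since $[S_i,\adj S_i]=0$ and $[T_i,\adj T_i]=0$), and moreover $S_1$ and $T_1$ commute with each other and with their adjoints by the third relation $[S_1,\adj T_1]=0$. A standard fact is that a commuting family of normal matrices is simultaneously unitarily diagonalizable. So first I would show that $\{S_1,T_1\}$ admits a common orthonormal eigenbasis $\{\ket{e_k}\}_{k=1}^N$ of $\complex^N$, with $S_1\ket{e_k}=s_k\ket{e_k}$ and $T_1\ket{e_k}=t_k\ket{e_k}$, while $S_2$ and $T_2$ are each individually diagonalizable in their own orthonormal bases. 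The real content is that the first row-block of $X$ in~\eqref{eq:16}, namely $(Y_{11},\,T_1Y_{11},\,S_1Y_{11},\,S_1T_1Y_{11})$, becomes diagonal in the $A_3$ tensor factor once we pass to this eigenbasis.

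Next I would rewrite $\rho=\adj X X$ block-by-block against the partition $A_1A_2:A_3$ and insert the spectral resolutions. Concretely, writing $\rho=\sum_{\alpha,\beta}\ket{\alpha}\bra{\beta}\otimes(\adj X)_{\alpha}X_{\beta}$ where the Greek indices run over the $A_1A_2$ block labels $\{1,2,3,4\}$ and each entry is an $N\times N$ operator on $\hh_3$, I would substitute $S_1=\sum_k s_k\ket{e_k}\bra{e_k}$, $T_1=\sum_k t_k\ket{e_k}\bra{e_k}$. Because $S_1$ and $T_1$ act diagonally, every block in the first block-row factorizes through the projectors $\ket{e_k}\bra{e_k}$, which is exactly the mechanism that lets the whole quadratic form $\adj X X$ split as a sum over $k$ of rank-decoupled pieces, each of which is a product state on $A_1A_2$ tensored with the one-dimensional $A_3$ component $\ket{e_k}$. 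The terms involving $X_{12},X_{21},X_{22}$ and the second-subsystem matrices $S_2,T_2$ would be handled analogously, using that the SSPPT conditions collapse~\eqref{eq:19} so that no residual cross-terms obstruct the decomposition. Assembling the pieces yields $\rho=\sum_i\lambda_i\ket{x_i}\bra{x_i}$ with each $\ket{x_i}$ a product vector over $A_1\otimes A_2\otimes A_3$, which is precisely~\eqref{eq:12}.

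The main obstacle I anticipate is the bookkeeping of the $A_1A_2$ part: after diagonalizing in $A_3$, the surviving $2\otimes2$ block structure coming from $Y_{11},Y_{12},Y_{21},Y_{22}$ and the roles of $S_2,T_2$ must still be shown separable in the $A_1:A_2$ sense, not merely in the coarse bi-partition $A_1A_2:A_3$. This is exactly the failure mode flagged in \Cref{sec:old-version-sppt}, where the old definition only guaranteed separability across $A_1A_2:A_3$; so the crux is verifying that the new SSPPT conditions~\eqref{eq:20p} genuinely force product structure across \emph{all three} cuts simultaneously. I would address this by checking that, restricted to each common eigenvector $\ket{e_k}$ of $S_1,T_1$ in the $A_3$ slot, the reduced object on $A_1\otimes A_2$ is itself a $2\otimes2$ SSPPT (equivalently bipartite-SSPPT) state, and then invoke the known bipartite result that SSPPT states are separable~\cite{Bylicka2013}. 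Thus the tripartite separability would reduce, eigenvector by eigenvector, to the established bipartite $2\otimes2$ case, and the final decomposition follows by summing over $k$.
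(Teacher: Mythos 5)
Your proposal is correct and takes essentially the same route as the paper: use the SSPPT relations \eqref{eq:20p} to simultaneously diagonalize $S_1,T_1$ (and individually diagonalize $S_2,T_2$), split $\rho=\adj{X}X$ into the four block-row contributions, and note that each contribution decomposes into product terms because the diagonalized blocks make every row a product vector, with cross terms between eigenvectors vanishing by orthogonality. The only differences are cosmetic: the paper extracts the diagonalizing unitaries into a block-unitary $G$ and reads off explicit product vectors such as $(1,s_k)\otimes(1,t_k)\otimes a_k$, where $a_k$ is the $k$-th row of $\adj{U}Y_{11}$ (so the $A_3$ factor is $a_k$, not $\ket{e_k}$ as you state), whereas your closing reduction to the bipartite $2\otimes 2$ SSPPT result of Bylicka \emph{et al.} is a harmless detour, since each reduced $A_1A_2$ object is already a rank-one product projector.
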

\begin{proof}
According to the definition,   $ T_1,T_2,S_1,S_2$ are normal and $T_1$ commutes with $S_1$. Therefore, we have the
following   diagonalizations
  \begin{equation}
    \label{eq:21p}
    \begin{split}
      S_1   & = U   \Sigma_{1}  \adj U,  T_1  = U   \Sigma_{2}  \adj U,\\
      S_2  & = V_1 \Lambda_{1} \adj V_1
      ,  T_2  = V_2 \Lambda_2 \adj V_2,\\
    \end{split}
  \end{equation}
  where $\Sigma_i,\Lambda_i$ are the diagonal matrices and $U$, $V_1$ and $V_2$ are
  all unitary matrices.  Then
  \begin{widetext}
  \begin{equation}
    \label{eq:22p}
    \begin{split}
      X &=
      \begin{pmatrix}
        U \adj U Y_{11} &  U \Sigma_2 \adj U Y_{11} & U \Sigma_1 \adj U  Y_{11} & U\Sigma_1\Sigma_2 \adj U Y_{11} \\
        0 &  V_1 \adj V_1 Y_{12} & 0 & V_1 \Lambda_1 \adj V_1 Y_{12}\\
        0 & 0 & V_2 \adj V_2 Y_{21} &  V_2 \Lambda_2\adj V_2Y_{21}\\
        0 & 0 & 0 & Y_{22}\\
      \end{pmatrix}                    \\
      & =
      \begin{pmatrix}
        U & 0 & 0 &0 \\
        0 & V_2 & 0 &0 \\
        0 & 0 & V_1 & 0 \\
        0 & 0 & 0 & \I_N\\
      \end{pmatrix}
      \begin{pmatrix}
        \tilde Y_{11} &   \Sigma_2\tilde Y_{11} & \Sigma_1 \tilde Y_{11} & \Sigma_1\Sigma_{2} \tilde Y_{11} \\
        0 &  \tilde Y_{12} & 0 &  \Lambda_1 \tilde Y_{12}\\
        0 & 0 &  \tilde Y_{21} &   \Lambda_2\tilde Y_{21}\\
        0 & 0 & 0 & Y_{22}\\
      \end{pmatrix}\\
      & = G \tilde X,
    \end{split}
  \end{equation}
 
  where
  \begin{equation}
    \label{eq:23p}
    \tilde Y_{11} =  \adj U Y_{11}, \tilde Y_{12} =  \adj V_1 Y_{12}, \tilde Y_{21} =  \adj V_2 Y_{21}.   
  \end{equation}
   \end{widetext}
  Let
  \begin{equation*}
    \begin{array}{rclcccccrl}
      C_1 & = &\left(\right. &\tilde Y_{11} &   \Sigma_2\tilde Y_{11} & \Sigma_1 \tilde Y_{11} & \Sigma_1\Sigma_{2} \tilde Y_{11}&\left.\right)&, \\
      C_2 & = &\left(\right. &  0 &  \tilde Y_{12} & 0 &  \Lambda_1 \tilde Y_{12}&\left.\right)&,\\
      C_3 & = & \left(\right. &  0 & 0 &  \tilde Y_{21} &   \Lambda_2\tilde Y_{21}&\left.\right)&,\\
      C_4 & = & \left( \right.& 0 & 0 & 0 & Y_{22}&\left.\right)&.\\
    \end{array}
  \end{equation*}
 Note that $\rho = \adj X X$ and $G$ is a unitary matrix,  then  we obtain,
  \begin{equation*}
    \rho = \sum_{i=1}^4\adj C_i C_i.
  \end{equation*}
  On the other hand,
  \begin{equation*}
    \begin{split}
      C_1 &= (\I,\Sigma_1)\otimes (\I,\Sigma_2) \otimes \tilde Y_{11},\\
      C_2 & = (\I,\Lambda_1)\otimes (\,0\;,\I\,) \otimes \tilde Y_{12},\\
      C_3 & = (\,0\;,\I\,)\otimes (\I,\Lambda_2) \otimes \tilde Y_{21},\\
      C_4 &= (\,0\;,\I\,)\otimes (\,0\;,\I\,) \otimes \tilde Y_{22}.
    \end{split}
  \end{equation*}
  It follows that each $\adj C_i C_i$ is separable, which implies the separability of $\rho$.
\end{proof}
  Note that this proof can also be served as a method to
  find the separability  decomposition of a  $2\otimes 2\otimes N$ SSPPT state.

  Here we give a example of SPPT states with our definition.
  \begin{example}
     It was proved that every PPT state  $\rho$ supported on $2\otimes 2\otimes N$ 
 with rank $N$ is separable and has the canonical form~\cite{Karnas2001}
 \begin{equation}
   \label{multiSPPTv0.6:eq:2}
   \rho = \sqrt{D}
   \begin{pmatrix}
     \I_{N} \\ \adj B \\\adj C \\\adj B \adj C
   \end{pmatrix}
   \begin{pmatrix}
     \I_{N} & B & C & CB
   \end{pmatrix}
\sqrt{D},
\end{equation}
where $B,C,D$ are operators in the third subsystem and $B,C$ are normal commuting matrices.
  \end{example}

It is easy to check that
this canonical form is SPPT in our definition. Forward, it is also an  SSPPT state.

\subsection{SPPT states in \texorpdfstring{$N_1\otimes N_2\otimes N_3$}{TEXT} tripartite system}
In this subsection we will extend the SPPT states to  general tripartite system $N_1\otimes N_2\otimes N_3$. The basic
idea is to  require $\rho$ being SPPT under the bi-partition $A_1:A_2A_3$ and $A_1A_2:A_3$ simultaneously.

Let $\rho$ be the density matrix with a   decomposition $ \rho= \adj X X$ in the tripartite system $N_1\otimes N_2\otimes
N_3$.

Under the bipartite partition $A_1:A_2A_3$, $X$ can be written as an $N_1\times N_1$ block matrix,
  
\begin{equation}
  \begin{split}
  X
  = 
  \begin{pmatrix}
    X_1 & S_{12}X_1&\cdots &S_{1N_1}X_1\\
    0 & X_2&\cdots  & S_{2N_1}X_2\\
    \vdots &\vdots &\ddots &\vdots\\
    0 & 0  &\cdots& X_{N_1}\\
  \end{pmatrix}
  = 
  \begin{pmatrix}
    & & & &\\
    & & & &\\
    & &\mathrel{\raisebox{1em}{$S_{ij}X_i$}}& &\\
    & & & &\\
  \end{pmatrix},
  \end{split}
\end{equation}
where
\begin{equation}
  \label{multiSPPTv0.6:eq:3}
  S_{ij} =
  \begin{cases}
    0, &  \text{ if } i>j,\\
    \I_{N_2N_3},& \text{ if } i=j.
  \end{cases}
\end{equation}
Similarly, $X_i$ can
be written as an $N_2\times N_2$ block matrix, 
\begin{equation*}
  \begin{split}
    X_i &=
    \begin{pmatrix}
      X_{i1} & \sijklt{i}{12}X_{i1} & \cdots &\sijklt{i}{1N_2}X_{i1}\\
      0 & X_{i2} & \cdots & \sijklt{i}{2N_2}X_{i2}\\
      \vdots & \vdots & \ddots & \vdots \\
      0 & 0 & \cdots & X_{iN_2}
    \end{pmatrix}
    \\
    &=
    \begin{pmatrix}
      &&&&\\
      &&&&\\
      &&\mathrel{\raisebox{1em}{$\sijklt{i}{kl}X_{ik}$}}&&\\
      &&&&\\
    \end{pmatrix}.  
  \end{split}
\end{equation*}
Here the superscript $2$ in the matrices $\sijklt{i}{kl}$ indicates the subsystem $A_2$ and
\begin{equation*}
  \sijklt{i}{kl} =
  \begin{cases}
    \I_{N_3},& \text{ if }  k=l;\\
    0, & \text{ if } k>l.
  \end{cases}
\end{equation*}
In order to be compatible with the SPPT structure in the bipartite system $A_1A_2:A_3$, we require  $S_{ij}$ being 
diagonal,
\begin{equation}
  \label{eq:7}
    S_{ij}  =
    \begin{pmatrix}
      \sijkl{ij}{1}& 0 & \cdots &0\\
      0 & \sijkl{ij}{2}&\cdots & 0\\
      \vdots & \vdots & \ddots & \vdots \\
      0& 0& \cdots & \sijkl{ij}{N_2}\\
          \end{pmatrix}.
        \end{equation}
 Hence
 \begin{equation}
   \begin{split}
     \rho &= \adj X X,\\
     X  &=
     \begin{pmatrix}
       &&\\
       &S_{ij}X_i&\\
       &&\\
     \end{pmatrix},\\
    S_{ij}X_i  &=
    \begin{pmatrix}
      &&\\
      &  \sijkl{ij}{k} \sijklt{i}{kl} X_{ik} &\\
      &&\\
    \end{pmatrix}.
    \end{split}
   \label{sppt}
 \end{equation}
Now we are ready to define the SPPT state in general tripartite system with the matrices introduced above.
\begin{defi}
  Let $\rho$ be the density matrix in the tripartite system  $A_1:A_2:A_3$. It has a    decomposition 
  of the form as \cref{sppt}. Then we call $\rho$  SPPT w.r.t.\ the tripartite \linebreak $A_1:A_2:A_3$ system (or  simply
  SPPT), if $\rho$ is SPPT  under the bi-partitions $A_1:A_2A_2$ and $A_1A_2:A_3$ simultaneously.
\end{defi}

Note that the conditions for SPPT are equivalent to the following explicit  matrix equations
\begin{equation}
  \label{eq:24}
  \begin{split}
    \sum_{i=1}^{N_1} \adj X_i [S_{ip},\adj S_{iq}]X_i & = 0,\\
    \sum_i^{N_1}\sum_k^{N_2} \adj X_{ik}   [ \sijkl{ij}{k}\sijklt{i}{kl},  \adj{(\sijkl{ij'}{k}\sijklt{i}{kl'}) }]
    X_{ik} &= 0 .
  \end{split}
\end{equation}

It is clear from the definition that  SPPT states defined here are indeed PPT, i.e. positive under any partial
transpose. As in previous subsection, we 
can also define a subclass of SPPT states with \cref{eq:24} satisfied.
\begin{defi}
 Let  $\rho$ be a state in tripartite system $N_1\otimes N_2 \otimes N_3$ with a   decomposition $\rho = \adj X
 X$  of the form as \cref{sppt}. Then $\rho$ is said to be SSPPT if 
  \begin{equation}
    \label{eq:25}
    \begin{split}
      [S_{ip},\adj S_{iq}] &= 0,\\
      [ \sijkl{ij}{k}\sijklt{i}{kl},  \adj{(\sijkl{ij'}{k}\sijklt{i}{kl'}) }] & = 0.
    \end{split}
  \end{equation}
\end{defi}

As in the bipartite system, we can prove SSPPT states are separable, which is a good property we want to keep.
   \begin{thm}
     Every  SSPPT state in tripartite system  is separable.
   \end{thm}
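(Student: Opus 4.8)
The plan is to follow the structure of the proof of Theorem~2 (the $2\otimes 2\otimes N$ case) and generalize it to arbitrary $N_1\otimes N_2\otimes N_3$. The crucial algebraic input is \cref{eq:25}: the commuting relations $[S_{ip},\adj S_{iq}]=0$ and $[\sijkl{ij}{k}\sijklt{i}{kl},\adj{(\sijkl{ij'}{k}\sijklt{i}{kl'})}]=0$. First I would argue that these relations allow us to simultaneously diagonalize the relevant matrices by unitary conjugation. Since $S_{ij}$ is diagonal in the $A_2$-block structure (\cref{eq:7}), its entries are the $\sijkl{ij}{k}$; the first relation in \cref{eq:25} says that for each fixed row-index $i$ the family $\{S_{ij}\}_j$ is a commuting family of normal matrices, and hence can be brought to diagonal form by a single unitary $U_i$ acting on the $A_2\otimes A_3$ space. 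Likewise the second relation says that, for each fixed $(i,k)$, the family $\{\sijkl{ij}{k}\sijklt{i}{kl}\}$ is normal and commuting, so it is simultaneously unitarily diagonalizable on the $A_3$ space by some unitary $W_{ik}$.

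Next I would absorb these unitaries into a global block-diagonal unitary $G$, exactly as the $2\otimes 2\otimes N$ proof factors $X=G\tilde X$ with $G$ unitary. Writing $\rho=\adj X X=\adj{\tilde X}\adj G G\tilde X=\adj{\tilde X}\tilde X$, the diagonalization converts every block $S_{ij}X_i$ into a form where the coupling coefficients are scalars (diagonal entries) rather than full matrices. The point is that after diagonalization each ``row'' $C_i$ of $\tilde X$ factors as a tensor product over the three subsystems: a vector of scalars built from the diagonal entries in subsystem $A_1$, another such vector in subsystem $A_2$, tensored with a fixed operator $\tilde Y$ acting only on $A_3$. This is precisely the pattern displayed for $C_1,\dots,C_4$ in Theorem~2, where each $C_i=(\ast)\otimes(\ast)\otimes\tilde Y$.

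Once each row $C_i$ of $\tilde X$ is written as $u_i\otimes v_i\otimes \tilde Y_i$ with $u_i\in\complex^{N_1}$, $v_i\in\complex^{N_2}$, the contribution $\adj C_i C_i=\ket{u_i}\bra{u_i}\otimes\ket{v_i}\bra{v_i}\otimes \adj{\tilde Y_i}\tilde Y_i$ is manifestly a product operator (up to a positive scalar), hence separable. Summing over the rows gives $\rho=\sum_i \adj C_i C_i$ as a sum of separable terms, which establishes separability. I would organize the bookkeeping so that the row index runs over $(i,k)$ (one row for each block-row of the full matrix $X$), mirroring how the four rows arose in the $2\otimes 2\otimes N$ case.

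The main obstacle I expect is the simultaneous diagonalization step: I must verify that the two commuting families in \cref{eq:25} can be diagonalized by \emph{compatible} unitaries, i.e.\ that the unitary $U_i$ diagonalizing the $A_2$-structure and the unitaries $W_{ik}$ diagonalizing the $A_3$-structure can be chosen so that the resulting $G$ is a genuine tensor-product (or at least block-diagonal) unitary that factors out of $\adj X X$ cleanly. In the $2\otimes 2\otimes N$ proof this worked because $T_1$ commutes with $S_1$ (the third relation in \cref{eq:20p}), allowing a \emph{single} $U$ to diagonalize both. For general $N_1,N_2,N_3$ the analogous cross-commutation must hold so that the $A_2$- and $A_3$-diagonalizations are mutually consistent and the scalar tensor-product factorization of each $C_i$ actually goes through; confirming that \cref{eq:25} supplies exactly these cross-relations, and handling the index bookkeeping carefully, is the delicate part of the argument.
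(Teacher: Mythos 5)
You have reproduced the paper's strategy --- commutation relations, simultaneous diagonalization, factoring a block-diagonal unitary out of $\adj{X}X$, and reading each row of $\tilde X$ as a product vector --- but the step you explicitly defer (``confirming that \cref{eq:25} supplies exactly these cross-relations'') is the crux of the proof, and in the form you state it your diagonalization step does not deliver the conclusion. Knowing only that the \emph{products} $P_{jl}=\sijkl{ij}{k}\sijklt{i}{kl}$ form a commuting normal family gives a unitary $W_{ik}$ in whose eigenbasis each $P_{jl}$ has eigenvalues $\mu^{(p)}_{jl}$, but nothing forces the array $(\mu^{(p)}_{jl})_{j,l}$ to have rank one in the pair $(j,l)$. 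Rank one is exactly what you need: the $n(i,k,p)$-th row of $\tilde X$ is $(\mu^{(p)}_{jl})_{j,l}\otimes a_{ikp}$, and unless $\mu^{(p)}_{jl}=\lijkl{ijk}{p}\,\lijklt{ikl}{p}$ this row is a product vector only across the cut $A_1A_2:A_3$; you would then prove bi-separability with respect to that cut, not tripartite separability. So the gap you flag is a real one, not mere index bookkeeping.

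It can, however, be closed in three lines, and this is exactly what the paper does: exploit the fact that $S$-matrices with repeated indices are identities, $\sijkl{ii}{k}=\I_{N_3}$ and $\sijklt{i}{kk}=\I_{N_3}$ (this is built into \cref{eq:7} and the triangular normalization). Consequently the family of products $\{P_{jl}\}$ already \emph{contains} the individual factors as special cases: setting $l=l'=k$ in the second line of \cref{eq:25} gives $[\sijkl{ij}{k},\adj{(\sijkl{ij'}{k})}]=0$; setting $l=k$ and $j'=i$ gives the cross-relation $[\sijkl{ij}{k},\adj{(\sijklt{i}{kl'})}]=0$; and setting $j=j'=i$ gives $[\sijklt{i}{kl},\adj{(\sijklt{i}{kl'})}]=0$. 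Therefore, for each fixed $(i,k)$, the whole family $\{\sijkl{ij}{k}\}_j\cup\{\sijklt{i}{kl}\}_l$ of normal matrices on $\complex^{N_3}$ is simultaneously diagonalized by a \emph{single} unitary $U_{ik}$; the eigenvalue of $P_{jl}$ then automatically factors as $\lijkl{ijk}{p}\lijklt{ikl}{p}$, and each row of $\tilde X$ becomes the genuine three-fold product vector of \cref{eq:32}. This also disposes of your compatibility worry about a separate $A_2$-level unitary $U_i$: since each $S_{ij}$ is block diagonal with blocks $\sijkl{ij}{k}$ by \cref{eq:7}, you should never invoke a diagonalization on the whole $A_2\otimes A_3$ space (an abstract simultaneous diagonalization there could mix the $A_2$ blocks and break the tensor structure); the blockwise unitaries $\adj U_{ik}$, assembled block-diagonally, are the only unitaries needed, and the first line of \cref{eq:25} is subsumed by the blockwise relations above.
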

   \begin{proof}
     Let $\rho$ be an SPPT in the tripartite system $N_1\otimes N_2 \linebreak\otimes N_3$, which possesses a decomposition as \cref{sppt}.
     It follows from the condition~\eqref{eq:25} that  $S_{ij}$ and $S_{ij'}$ are commutable for any given $i$.

     In particular,  given $i,k$,
     \begin{equation}
       \label{eq:26}
       \sijkl{ij}{k} \commutes \sijkl{ij'}{k}.
     \end{equation}
     Note that if we put  $l=k$ and $j'=i$ in \cref{eq:25}, then for any given $i,k$ we have
     \begin{equation}
       \label{eq:27}
       \sijkl{ij}{k} \commutes \sijklt{i}{kl'}.
     \end{equation}
     In the similar way, let $j=i$ and $j'=i$  in \cref{eq:25},  we obtain,
     \begin{equation}
       \label{eq:28}
       \sijklt{i}{kl'} \commutes \sijklt{i}{kl}.
     \end{equation}
     Therefore we have a simultaneous diagonalizations,
     \begin{equation}
       \label{eq:29}
       \begin{split}
         \sijkl{ij}{k}& = U_{ik} \Lijkl{ijk}\adj U_{ik},\\
         \sijklt{i}{kl}& = U_{ik} \Lijklt{ikl} \adj U_{ik},
       \end{split}
     \end{equation}
     where $U_{ik}$ are unitary matrices and $\Lijkl{ijk},\Lijklt{ikl}$ are diagonal matrices with
     \begin{equation*}
       \begin{split}
         \Lijkl{ijk} & = \mathrm{diag} ( \lijkl{ijk}{1},\lijkl{ijk}{2},\ldots,\lijkl{ijk}{N_3}),\\
         \Lijklt{ikl} & = \mathrm{diag} ( \lijklt{ikl}{1}\,,\lijklt{ikl}{2},\ldots,\,\lijklt{ikl}{N_3}).
       \end{split}
     \end{equation*}
     Let
     \begin{equation}
       \label{eq:30}
       \begin{split}
       U & =
       \begin{pmatrix}
         U_1 & 0 & \cdots & 0\\
         0 & U_2 & \cdots & 0\\
         \vdots & \vdots &\ddots& \vdots \\
         0 & 0 & \cdots & U_{N_1}
       \end{pmatrix},\\
       \tilde X &=
       \begin{pmatrix}
         Y_{11} & Y_{12} & \cdots &Y_{1N_1 } \\
         Y_{21} & Y_{22} & \cdots & Y_{2 N_1} \\
         \vdots & \vdots & \ddots & \vdots \\
         Y_{N_1,1} & Y_{N_1,2} & \cdots & Y_{N_1, N_1}
       \end{pmatrix},
       \end{split}
     \end{equation}
     where
     \begin{align*}
       U_i &=
             \begin{pmatrix}
               \adj U_{i1} & 0 & \cdots & 0\\
               0 & \adj U_{i2} & \cdots & 0\\
               \vdots & \vdots &\ddots& \vdots \\
               0 & 0 & \cdots & \adj U_{i,N_2}
             \end{pmatrix},\\
       Y_{ij} &=
                \begin{pmatrix}
                  &&\\
                  &\Lambda_{ijk}^1\Lambda_{ikl}^2X_{ik}&\\
                  &&
                \end{pmatrix},\\
       \tilde X_{ik}& = \adj U_{ik}X_{ik}.
     \end{align*}
     Since $U$ is unitary and $X = U\tilde X$, we have $\rho= \adj{\tilde X} \tilde X$.

     Suppose $\tilde X_{ik} = {(a_{ik1},a_{ik2},\ldots,a_{ikN_3})}^\intercal$ where each $a_{ikl}$ is a row vectors in
     $\complex^{N_3}$ space. Now
     consider the \linebreak$n(i,k,p)$-th  row of $\tilde X$, which is denoted by
     $v_{ikp}$.
     Then we have
     \begin{equation}
       \label{eq:31}
       v_{ikp} = w_{ikp}\otimes a_{ikp},
     \end{equation}
     where
     \begin{align*}
       w_{ikp} &= (y_{ikp1},y_{ikp2},\ldots, y_{ikpN_2}),\\
       y_{ikpj}& = (\lijkl{ijk}{p}\lijklt{ik1}{p},\lijkl{ijk}{p}\lijklt{ik2}{p},\ldots,\lijkl{ijk}{p}\lijklt{ikN_2}{p})\\
               & = \lijkl{ijk}{p}(\lijklt{ik1}{p},\lijklt{ik2}{p},\ldots,\lijklt{ikN_2}{p}).
     \end{align*}
     It follows that each $v_{ikp}$ is a product vector,
     \begin{equation}
       \label{eq:32}
       \begin{split}
         v_{ikp} =& (\lijkl{i1k}{p},\ldots,\lijkl{iN_1k}{p})\\
         &\otimes
       (\lijklt{ik1}{p},\ldots,\lijklt{ikN_2}{p})\otimes a_{ikp}.
       \end{split}
     \end{equation}
    Therefore  $\rho$ is separable.
   \end{proof}
   This proof can also be utilized as  a method to find the separability decomposition  of  SSPPT states in tripartite
   system.
   Now we end this subsection by given some examples of tripartite SPPT states.
   \begin{example}
   Recall that a state $\rho$ on $N_1\otimes N_2$ is said to be a CQ state~\cite{Piani2008} if it has the form
   \begin{equation}
     \label{multiSPPTv0.5:eq:1}
     \rho = \sum_i^{N_1} p_i \ket{i}\bra{i} \otimes \rho_i^{A_2},
   \end{equation}
   Where $\rho_i^{A_2}$ are  density matrices in $A_2$ subsystem. 
   It was proved that any CQ state is in fact SSPPT state. Similarly, we construct  a
   class of SPPT states  in  tripartite system $A_1:A_2:A_3$ as follows,
   \begin{equation}
     \label{multiSPPTv0.5:eq:2}
     \rho = \sum_i^{N_1}\sum_{j}^{N_2} p_{ij} \ket{ij}\bra{ij}\otimes \rho_{ij}^{A_3},
   \end{equation}
   where $\rho_{ij}^{A_3}$ are density matrices in subsystem $A_3$. 
   This is in fact an SSPPT states with  $ \sijkl{ij}{k} =\delta_{ij}I$,  $\sijklt{i}{kl}=\delta_{kl}I$ and $X_{ik} = p_{ik}\rho_{ik}^{A_3}$.
 \end{example}
 \begin{example}
   Recall that every PPT state $\rho$ in $N_1\otimes N_2\linebreak\otimes N_3$ with $\rank{\braket{00|\rho}{00}}=\rank{\rho } =
   N_3$ can be transformed into the following canonical form  by using  a reversible local operator \cite{Wang2005a},
   \begin{equation}
     \label{multiSPPTv0.6:eq:4}
     \rho = \adj T T,
   \end{equation}
   where
   \begin{equation}
     \label{multiSPPTv0.6:eq:5}
     T = (\I_{N_{3}}, A_2,\ldots,A_{N_1})\otimes (\I_{N_{3}},B_2,\ldots,B_{N_2}) .
   \end{equation}
   Moreover $A_i,B_i$ are a set of normal commuting matrices.
    \end{example}
   Now we show   this canonical form  is actually an  SSPPT  state.

   Assume $A_1 =B_1= \I _{N_3}$ and $X_{ij}=\I_{N_3},\forall i,j$.
   Let $S_{1,j,1}^1 = A_j$ and $S_{1,1l}^2 = B_l$ and  all the other $S_{ij,k}^1,S_{i,kl}^2$ are zero matrices. Then
   $T$ coincides with $X$ in \cref{sppt}.  Since all the $S_{ij,k}^1$ and $S_{i,kl}^2$ are normal commuting, $\rho$ is  SSPPT.\@

   \subsection{SPPT states in multipartite system}\label{sec:sppt-multi-partite}

   In this subsection, we will finally give the definition of SPPT in $(d+1)$-particle  system
   $N_1\otimes N_2\otimes \cdots \otimes N_d\otimes N_0$. To begin with, we will fix some notations for representing
   matrices in the multipartite system.
   
   Let $\alpha_{n}=(i_1,i_2,\ldots,i_n)$ with  $i_k\in\{1,2,\ldots,N_k\},k \linebreak\in\{ 1,2,\ldots,d\}$. Similarly, let
   $\beta_n = (j_1,j_2,\ldots,j_n)$. 
   Note that  the indexes $i_n,j_n$ correspond to the 
    $n$-th subsystem. For simplicity, we write  $\alpha_d,\beta_d$  as $\alpha,\beta$. Hence we can represent the
    matrices in a  conciser form. For example,
   \begin{equation}
     \label{eq:33}
     \begin{split}
       X_{\alpha_n} &= X_{i_1,i_2,\ldots,i_n},\\
       S_{\alpha_n,j_m}& = S_{i_1,i_2,\ldots,i_m,j_m,i_{m+1},\ldots,i_n},m\leqslant n.
     \end{split}
   \end{equation}
   
   Hereafter in this subsection,   $(\alpha,\beta)$-th  entry of a matrix to represent the element in $n(\alpha)$-th
   row and $n(\beta)$-th column.
   
   Let $\rho = \adj X X$ be a density matrix in the  $N_1\otimes N_2\otimes \cdots\linebreak \otimes N_d\otimes N_0$
   system.
   Consider the following class of upper triangular block matrix $X$, whose elements are $N_0\times N_0$ matrices. 
   The  $(\alpha,\beta)$-th entry of $X$ is
   \begin{equation}
     \label{eq:34}
     \prod_{p=1}^d S^p_{\alpha,j_p} X_{\alpha},\quad S^p_{\alpha,j_p},\,X_{\alpha}\in \complex^{N_0\times N_0},
   \end{equation}
   where
   \begin{align*}
     S^p_{\alpha_n,j_p} & = \mathrm{diag} (
                          S^p_{\alpha_n,1,j_p},\ldots,S^p_{\alpha_n,N_{n+1},j_p}),\\
     X_{\alpha_n} & =
                    \begin{pmatrix*}[c]
                      &\\
                      & S^{n+1}_{\alpha_n,i_{n+1},j_{n+1}}X_{\alpha_n,i_{n+1}}\\
                      & & &
                    \end{pmatrix*}\\
                            &=
                            \begin{pmatrix*}[c]
                      &\\
                      & S^{n+1}_{\alpha_{n+1},j_{n+1}}X_{\alpha_{n+1}}\\
                      & & &
                    \end{pmatrix*},\\
     S^p_{\alpha_n,j_p} & =
                          \begin{cases}
                            1, & j_p = i_p,\\
                            0, & j_p < i_p.
                          \end{cases}
   \end{align*}

  \begin{defi}
 Let     $\rho$ be the density matrix in the $(d+1)$-body system $N_1\otimes \cdots\otimes N_d\otimes N_0$ with the
 decomposition $\rho = \adj X X$ of the form  as \cref{eq:34}.
Then $\rho$ is said to be SPPT if 
    \begin{equation}
      \label{eq:35}
      \sum_{\alpha_{n}}      \adj X_{\alpha_n}  \left[\;\prod_{p=1}^{n}S_{\alpha_n,j_p}^p,\adj{\left( \prod_{q=1}^n S^{q}_{\alpha_n,j'_q} \right)}\;\right]X_{\alpha_n} = 0
    \end{equation}
  for any $\beta_n= (j_1,j_2,\ldots,j_n)$,
    $\beta_n'= (j_1',j_2',\ldots,j_n')$ and $n=1,2,\ldots,d$.
  \end{defi}

  The following  theorem shows that this generalization of  SPPT  preserves the PPT property.
  \begin{thm}
    Any SPPT state is PPT.\@
  \end{thm}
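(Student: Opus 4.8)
\emph{Proof proposal.} The plan is to make positivity manifest by producing, for each partial transpose, an explicit Cholesky-type factorization $\rho^{\intercal_I}=\adj{Y_I}Y_I$; the defining equations~\eqref{eq:35} turn out to be exactly the identities needed to verify these factorizations, so positivity requires no further argument. First I would cut down the set of transposes to be checked: since transposing every tensor factor preserves positivity, $\intercal_I\circ\intercal_{\{1,\ldots,d,0\}}=\intercal_{I^{c}}$ gives $\rho^{\intercal_I}\geqslant 0$ if and only if $\rho^{\intercal_{I^{c}}}\geqslant 0$, so it suffices to treat subsets $I\subseteq\{1,\ldots,d\}$ that leave the innermost factor $N_0$ untouched. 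Writing the $(\gamma,\beta)$-block of $X$ as $A_{\gamma,\beta}X_{\gamma}$ with $A_{\gamma,\beta}=\prod_{p=1}^{d}S^{p}_{\gamma,j_{p}}$ and $\beta=(j_1,\ldots,j_d)$, one has $\rho_{\alpha\beta}=\sum_{\gamma}\adj X_{\gamma}\,\adj A_{\gamma,\alpha}A_{\gamma,\beta}\,X_{\gamma}$. The matrix $Y_I$ is then obtained from $X$ by replacing the factors $S^{p}$ with $p\in I$ by their adjoints and moving them into the positions dictated by the transpose.

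The mechanism is transparent in the extreme case $I=\{1,\ldots,d\}$, where $Y_I$ has $(\gamma,\alpha)$-block $\adj A_{\gamma,\alpha}X_{\gamma}$. Here $(\rho^{\intercal_I})_{\alpha\beta}=\rho_{\beta\alpha}=\sum_{\gamma}\adj X_{\gamma}\,\adj A_{\gamma,\beta}A_{\gamma,\alpha}\,X_{\gamma}$, whereas $(\adj{Y_I}Y_I)_{\alpha\beta}=\sum_{\gamma}\adj X_{\gamma}\,A_{\gamma,\alpha}\adj A_{\gamma,\beta}\,X_{\gamma}$, so their difference is $\sum_{\gamma}\adj X_{\gamma}\,[\adj A_{\gamma,\beta},A_{\gamma,\alpha}]\,X_{\gamma}$. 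This is precisely minus the left-hand side of~\eqref{eq:35} at $n=d$ with free indices $\alpha$ and $\beta$, hence vanishes, and $\rho^{\intercal_I}=\adj{Y_I}Y_I\geqslant 0$. The same computation shows that for any $I$ the only obstruction to the factorization is a sum of $X$-conjugated commutators of products of the $S^{p}$.

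For a proper subset $I$ the factors $S^{p}_{\gamma,i_p}$ and $\adj{S^{p}_{\gamma,j_p}}$ with $p\in I$ no longer sit in matching positions on the two sides of the block identity, and one must transport them past the intervening factors, which generates commutators. To organize this I would induct on the number of parties, viewing $\hh=\mathcal H_1\otimes(\mathcal H_2\otimes\cdots\otimes\mathcal H_0)$: the outer block-triangular pattern in the $A_1$ index, with its diagonal multipliers $S^{1}$, is exactly the bipartite SPPT form, while each diagonal block $X_{i_1}$ is itself an SPPT datum for the smaller system $N_2\otimes\cdots\otimes N_d\otimes N_0$, whose hypotheses are the instances of~\eqref{eq:35} at levels $n\geqslant 2$. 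Because every $S^{p}_{\alpha_n,j_p}$ is block diagonal in the sense of~\eqref{eq:34}, the transpose over the inner parties acts blockwise and commutes with the outer transpose over $A_1$; the two cases $1\in I$ and $1\notin I$ then combine the bipartite factorization (the level-$1$ instance of~\eqref{eq:35}) with the inductive hypothesis applied to the $X_{i_1}$.

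The hard part is exactly this bookkeeping of non-commuting products: one must show that, once the factors with $p\in I$ have been reordered, the residual commutators regroup level by level into the vanishing sums~\eqref{eq:35} for suitable choices of $\beta_n,\beta_n'$. The fact that~\eqref{eq:35} is imposed at \emph{every} level $n$, and not merely at $n=d$, together with the block-diagonal form of the $S^{p}$, is what makes this regrouping succeed; once it does, positivity of each $\rho^{\intercal_I}$ is automatic from the factored form $\adj{Y_I}Y_I$.
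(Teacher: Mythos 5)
Your mechanism --- block factorizations $\rho^{\intercal_I}=\adj{Y_I}Y_I$ certified by the commutator identities \eqref{eq:35} --- is exactly the paper's, and your explicit computation for $I=\{1,\ldots,d\}$ is correct. The genuine gap is that the step you yourself flag as ``the hard part'' (regrouping the residual commutators for an arbitrary subset $I$ into instances of \eqref{eq:35}) is never carried out, and in fact it cannot be carried out as described. Test it on the smallest non-nested case: $2\otimes 2\otimes N$ with $I=\{2\}$. Writing $X$ as in \cref{eq:16} and taking for $Y_I$ the natural candidate (star the $A_2$-multipliers $T_i$, keep the $S_i$), the block with row index $\alpha=(1,2)$ and column index $\beta=(2,1)$ gives $(\rho^{\intercal_2})_{\alpha\beta}=\rho_{(1,1),(2,2)}=\adj{Y_{11}}S_1T_1Y_{11}$, whereas $(\adj{Y_I}Y_I)_{\alpha\beta}=\adj{Y_{11}}T_1S_1Y_{11}$, so the factorization requires $\adj{Y_{11}}[S_1,T_1]Y_{11}=0$: a commutator of two \emph{unstarred} multipliers. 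Every commutator appearing in \eqref{eq:19} and \eqref{eq:35} pairs an unstarred product against a starred one, and taking adjoints of those identities only yields the reversed mixed pairings, never $[S_1,T_1]$. So your induction has no hypothesis to invoke precisely where it is needed; the SPPT conditions are certificates adapted to the nested bi-partitions $A_1\cdots A_n:A_{n+1}\cdots A_dA_0$ and say nothing about the rest.

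You should also know that the paper never attempts what you attempt: its proof checks positivity only for the nested transposes $\Gamma_n$ of \eqref{eq:54} --- the same computation you did for $n=d$, repeated at each level $n$ --- and then asserts, without justification, that this suffices for PPT. Modulo the global transpose this covers $\intercal_{\{1,\ldots,n\}}$ and its complement, but not, for example, $\intercal_{\{2\}}$ in a tripartite system. Your instinct that the remaining subsets require an argument is therefore correct and exposes a real lacuna in the paper's own theorem; but your proposal does not close it, and the obstruction above indicates it cannot be closed by the proposed regrouping --- either the theorem must be read as positivity under the nested family only, or the non-nested transposes demand a genuinely new argument (or a counterexample).
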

  \begin{proof}
    Consider the density matrix $\rho$ in the $N_1\otimes\cdots N_d\otimes N_0$ system.
    Suppose $\rho = \adj{X} X$, where $X$ has the form as  \cref{eq:34}.
    Recall a special partial transpose defined previously,
    \begin{equation*}
      \Gamma_n = \intercal_{\{1,2,\ldots,n\}}, n = 1,2,\ldots,d.
    \end{equation*}
    To prove the PPT property of $\rho$, it suffice  to show that $\rho^{\Gamma_{n}} $ is positive for any $n$.

    Consider the state under the bi-partition $A_1A_2\ldots A_n:A_{n+1}\ldots A_dA_0$. Then $X$ can be regarded as a
    $r\times r $ block matrix, where $r = \prod_{k=1}^nN_k$.
    Given any $\alpha_n = (i_1,i_2,\ldots,i_n)$ and $\beta_n = (j_1,j_2,\ldots,j_n)$, the $(\alpha_n,\beta_{n})$-th
    entry of $X$ is
    \begin{equation*}
      \prod_{p=1}^n S^p_{\alpha_n,j_p} X_{\alpha_n}. 
    \end{equation*}
    Let $Y$ be the  matrix whose $(\alpha_n,\beta_n)$-th entry is
    \begin{equation*}
      \adj{ \left( \prod_{p=1}^n S^p_{\alpha_n,j_p}\right)} X_{\alpha_n}. 
    \end{equation*}
    According to the conditions~\eqref{eq:35}, we have  $\rho^{\Gamma_n} = \adj{Y}Y$, which completes the proof.
  \end{proof}

  In a similar way to  tripartite system, we  can define a special sub-class of SPPT states.
  \begin{defi}
    Let $\rho=\adj X X$ be an SPPT state where $X$ has the form as \cref{eq:34}.
    Then $\rho$ is said to be SSPPT if 
    \begin{equation}
      \label{eq:ssppt}
      \left[\;\prod_{p=1}^{n}S_{\alpha_n,j_p}^p,\adj{\left( \prod_{q=1}^n S^{q}_{\alpha_n,j'_q} \right)}\;\right] =0,
    \end{equation}
    for any $ \alpha_n,\beta_n,\beta_n'$ and $n=1,2,\ldots,d$.
  \end{defi}

  The following theorem shows that SSPPT guarantees the separability in an arbitrary multipartite system.
  \begin{thm}
    Any SSPPT state is separable.
  \end{thm}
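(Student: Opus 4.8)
The plan is to follow the strategy of the tripartite separability theorem and reduce $\rho$ to a sum of rank-one product states by simultaneously diagonalizing the coefficient blocks. First I would fix the full upper index $\alpha=(i_1,\ldots,i_d)$ and extract from the single condition~\eqref{eq:ssppt} the pairwise relations between the individual blocks $S^p_{\alpha,j_p}$. Taking $n=d$ in~\eqref{eq:ssppt} and choosing $\beta=(j_1,\ldots,j_d)$ with $j_q=i_q$ for every $q\neq p$ forces $S^q_{\alpha,j_q}=\I_{N_0}$, so that $\prod_{q=1}^{d}S^q_{\alpha,j_q}=S^p_{\alpha,j_p}$; doing the same for $\beta'$ isolates an arbitrary pair and yields
\[
  \bigl[\,S^p_{\alpha,j_p},\,\adj{(S^{p'}_{\alpha,j'_{p'}})}\,\bigr]=0
\]
for all $p,p',j_p,j'_{p'}$. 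In particular every $S^p_{\alpha,j_p}$ is normal, and by the Fuglede--Putnam theorem the relation $[A,\adj B]=0$ between normal matrices upgrades to $[A,B]=0$. Hence, for each fixed $\alpha$, the family $\{S^p_{\alpha,j_p}\}_{p,j_p}$ is a commuting family of normal matrices and admits a common unitary diagonalization $S^p_{\alpha,j_p}=U_\alpha\,\Lambda^p_{\alpha,j_p}\,\adj U_\alpha$ with each $\Lambda^p_{\alpha,j_p}$ diagonal.

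Next I would perform the global change of basis $X=U\tilde X$, where $U=\bigoplus_\alpha U_\alpha$ is block-diagonal over the row index $\alpha$ and $\tilde X_\alpha=\adj U_\alpha X_\alpha$. A telescoping computation (inserting $U_\alpha\adj U_\alpha$ between consecutive factors) shows that the $(\alpha,\beta)$-th block of $\tilde X$ equals $\bigl(\prod_{p=1}^{d}\Lambda^p_{\alpha,j_p}\bigr)\tilde X_\alpha$, and since $U$ is unitary we keep $\rho=\adj X X=\adj{\tilde X}\tilde X$. Writing $\Lambda^p_{\alpha,j_p}=\mathrm{diag}(\lambda^p_{\alpha,j_p,1},\ldots,\lambda^p_{\alpha,j_p,N_0})$, the $(s,t)$ entry of that block is $\bigl(\prod_{p=1}^{d}\lambda^p_{\alpha,j_p,s}\bigr)(\tilde X_\alpha)_{st}$.

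The key observation, exactly as in the tripartite argument, is that for fixed $(\alpha,s)$ the scalar prefactor $\prod_{p=1}^{d}\lambda^p_{\alpha,j_p,s}$ factorizes over the column index $\beta=(j_1,\ldots,j_d)$, its $p$-th factor depending only on $j_p$. Therefore the $(\alpha,s)$-th row $v_{\alpha,s}$ of $\tilde X$ is the product vector
\[
  \bigotimes_{p=1}^{d}\bigl(\lambda^p_{\alpha,1,s},\ldots,\lambda^p_{\alpha,N_p,s}\bigr)\otimes\bigl((\tilde X_\alpha)_{s1},\ldots,(\tilde X_\alpha)_{sN_0}\bigr),
\]
a pure product state across $A_1:\cdots:A_d:A_0$. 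Since $\adj{\tilde X}\tilde X=\sum_{\alpha,s}\adj{v_{\alpha,s}}v_{\alpha,s}$ is the sum of the outer products of the rows of $\tilde X$ (and the conjugate of a product vector is again a product vector), $\rho$ is a sum of product pure states and hence separable in the sense of~\eqref{eq:12}.

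I expect the main obstacle to be organizational rather than conceptual. The delicate points are: isolating the individual blocks $S^p_{\alpha,j_p}$ out of the nested, block-diagonal product structure in~\eqref{eq:34}; confirming that a single unitary $U_\alpha$ attached to the full index $\alpha$ simultaneously diagonalizes the blocks at \emph{every} level $p=1,\ldots,d$ (this is where the Fuglede--Putnam theorem is needed to turn the commute-with-adjoint relations into genuine mutual commutativity); and verifying the telescoping identity that produces the factorized prefactor. None of these is deep, but the multi-index bookkeeping is the part most likely to conceal an error, so I would carry it out carefully.
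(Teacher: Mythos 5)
Your proposal is correct and takes essentially the same route as the paper's own proof: reduce the SSPPT condition to the pairwise relations $[S^p_{\alpha,j_p},\adj{(S^{p'}_{\alpha,j'_{p'}})}]=0$ by setting all other indices equal to those of $\alpha$, simultaneously diagonalize the resulting commuting normal family with one unitary $U_\alpha$ per row index, pass to $\tilde X$ via the block-diagonal unitary so that $\rho=\adj{\tilde X}\tilde X$, and observe that every row of $\tilde X$ is a product vector. The only differences are cosmetic refinements: you take $n=d$ directly rather than arguing through the nested block structure, and you make explicit the Fuglede--Putnam step that upgrades $[A,\adj B]=0$ to $[A,B]=0$, which the paper uses implicitly.
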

  \begin{proof}
    Let $\rho$ be the density matrix in the $(d+1)$-particle system $N_1\otimes \cdots N_d\otimes N_0$. 
    Suppose $\rho = \adj{X}X$ with $X$ being of  the form as  \cref{eq:34}.

    Consider the condition \eqref{eq:ssppt}. Given $k,l$, choose $\alpha_n$ and $\beta_n$ such that
    \begin{equation*}
      \begin{split}
        j_p &= i_p ,p=1,\ldots,n,p\neq k;\\
        i_{q'}& = i_{q}, q = 1,\ldots, n, q\neq l.
      \end{split}
    \end{equation*}
    It follows that
    \begin{equation}
      \label{multiSPPTv0.6:eq:6}
      S_{\alpha_n,j_p}^{p}= S_{\alpha_n,j'_q}^{q} = \I, p\neq k,q\neq l,
    \end{equation}
    which  implies that
    \begin{equation}
      \label{multiSPPTv0.6:eq:7}
\left[         S^k_{\alpha_n,j_k} S^l_{\alpha_n,j'_l} \right] =0.
\end{equation}
Forward by the structure of $S_{\alpha_n,j_p}^p$, we have
\begin{equation}
  \label{multiSPPTv0.6:eq:8}
  \left[         S^k_{\alpha,j_k} S^l_{\alpha,j'_l} \right] =0.
\end{equation}
   
That is to say $\{S_{\alpha,j_p}^p\}$  is a set of normal commuting matrices for fixed $\alpha$.

    Hence we have the simultaneous diagonalizations
    \begin{equation*}
      S^p_{\alpha,j_p} =  U_{\alpha} \Lambda^p_{\alpha,j_p}\adj U_{\alpha},
    \end{equation*}
    where $U_{\alpha}$ is an $N_0\times N_0$ unitary matrix and $\Lambda^p_{\alpha,j_p}$ are diagonal matrices with
    \begin{equation*}
      \Lambda_{\alpha,j_p}^p = \mathrm{diag} (\lambda_{\alpha,1,j_p}^p,\lambda_{\alpha,2,j_p}^p,\ldots,\lambda_{\alpha,N_0,j_p}^p).
    \end{equation*}
    Let $U$ be the  block matrix whose $(\alpha,\alpha)$-th entry is $U_{\alpha}$ and other entries are zero. Put
    $\tilde X = U X$ and $\tilde X_{\alpha} = U_{\alpha}X_{\alpha}$, then $\rho = \adj{\tilde X }\tilde X$.

Let 
    \begin{equation*}
      \tilde X_{\alpha}  =
      \begin{pmatrix}
        a_{\alpha,1}\\
        a_{\alpha,2}\\
        \vdots\\
        a_{\alpha,N_0}
      \end{pmatrix},
    \end{equation*}
    where each $a_{\alpha,i_0}$ is a row vector in $\complex^{N_0}$. Note that
    $(\alpha,\beta)$-th entry of $\tilde X$  is
    \begin{equation*}
     \left( \prod_{p=1}^n \Lambda_{\alpha,j_p} \right)\tilde X_{\alpha}.
   \end{equation*}
   Hence we have,
    \begin{equation}
      \label{eq:38}
      v_{\alpha,i_0} = w_{\alpha,i_0} \otimes a_{\alpha,i_0},
    \end{equation}
    where
    \begin{equation*}
      \begin{split}
        w_{\alpha,i_0} &= \otimes _{p=1}^d y_{\alpha,i_0}^p,\\
        y_{\alpha,i_0}^p & = (\lambda_{\alpha,i_0,1}^p,\lambda^p_{\alpha,i_0,2},\ldots,\lambda^p_{\alpha,i_0,N_p})\in
        \complex^{N_p}.
      \end{split}
    \end{equation*}
    Now that each row of $\tilde X$ is a product vector, it follows that  $\rho$ is  separable.
  \end{proof}

   The following lemma shows an example of SPPT state in the  general multipartite system.
   \begin{lem}
     Any pure state is separable if and only if it is SPPT.\@
  \end{lem}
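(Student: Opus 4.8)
The plan is to prove the two directions separately, each leaning on a theorem already established in this subsection.

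For the ``only if'' part (SPPT $\Rightarrow$ separable) I would argue by pure citation. By the theorem that every SPPT state is PPT, a pure SPPT state is PPT; and a pure state is separable if and only if it is PPT (equivalently, if and only if its generating vector is a product vector). Chaining these gives separability with no computation. The content therefore lies entirely in the ``if'' part: every separable pure state must be realized in the form \cref{eq:34} subject to the SPPT conditions. Here a separable pure state is a product state $\rho=\ket{\psi}\bra{\psi}$ with $\ket{\psi}=\ket{\phi_1}\otimes\cdots\otimes\ket{\phi_d}\otimes\ket{\phi_0}$, and I would construct the Cholesky factor $X$ by hand so that it is in fact SSPPT.

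Concretely, since $\rho$ has rank one and the form \cref{eq:34} forces block-row $\alpha$ of $X$ to vanish whenever $X_\alpha=0$, only one diagonal block may be nonzero. The plan is to place it at the coordinatewise-minimal support index $\alpha^{*}=(m_1,\ldots,m_d)$, where $m_k=\min\{j:(\phi_k)_j\neq 0\}$, setting $X_{\alpha^{*}}=e_1\bra{\psi_{\alpha^{*}}}$ (with $e_1$ the first basis vector of $\complex^{N_0}$ and $\bra{\psi_{\alpha^{*}}}$ the $\alpha^{*}$-slice of $\bra{\psi}$), all other $X_\alpha=0$, and the couplings equal to the scalar matrices $S^{p}_{\alpha^{*},j_p}=\big(\conj{(\phi_p)_{j_p}}/\conj{(\phi_p)_{m_p}}\big)\I_{N_0}$. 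A one-line computation then shows the $(\alpha^{*},\beta)$ block equals $e_1\bra{\psi_\beta}$, so that $X$ has the single nonzero row $\bra{\psi}$ and hence $\adj X X=\ket{\psi}\bra{\psi}=\rho$; and since every $S^{p}$ is a scalar multiple of $\I_{N_0}$, all the commutators in \cref{eq:ssppt} vanish, exhibiting $\rho$ as SSPPT and therefore SPPT.

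The step I expect to be the main obstacle is reconciling this construction with the triangular constraint built into \cref{eq:34}, namely that $S^{p}_{\alpha,j_p}$ is forced to be $0$ when $j_p<i_p$. The naive choice $\alpha^{*}=(1,\ldots,1)$ breaks down the moment some $(\phi_k)_1=0$, and one cannot repair it by rotating $\ket{\phi_k}$ onto the first basis vector, because the SPPT form is not invariant under local unitaries on the first $d$ subsystems. Taking $\alpha^{*}$ minimal in the support is exactly what resolves this: for any $\beta$ with some $j_k<m_k$ the forced vanishing of $S^{k}_{\alpha^{*},j_k}$ coincides with the vanishing of $(\phi_k)_{j_k}$, so the identity ``block $=e_1\bra{\psi_\beta}$'' holds with both sides zero and the upper-triangular structure stays consistent. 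Checking this matching in every case, and confirming that the commutator conditions also hold at each partial level $n$, is the only delicate bookkeeping; the rest is routine.
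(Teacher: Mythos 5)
Your proof is correct and follows the same overall strategy as the paper: the direction SPPT $\Rightarrow$ separable is handled by citing SPPT $\Rightarrow$ PPT together with the fact that a pure PPT state is separable, and the substantive direction is an explicit construction exhibiting every pure product state as SSPPT, hence SPPT. The constructions, however, differ in a meaningful detail. The paper places all nonzero data at $\alpha_1=(1,\ldots,1)$, taking $X_{\alpha_1}$ to have first row $w_0$ and couplings $S^p_{\alpha_1,j_p}=\mathrm{diag}(w_{p,j_p},0,\ldots,0)$; this avoids any division (so no case analysis on vanishing coefficients), but these couplings are not the identity when $j_p=i_p=1$, so the resulting $X$ is not literally of the form \cref{eq:34}. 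Indeed, if some $w_{p,1}=0$, the paper's $X$ has a zero $(\alpha_1,\alpha_1)$ block yet nonzero blocks in the same block row, which is impossible for a matrix of the form \cref{eq:34}: there the diagonal block equals $X_{\alpha}$, and $X_{\alpha}=0$ annihilates the entire block row. Your placement at the coordinatewise-minimal support index $\alpha^{*}$, with scalar couplings $S^p_{\alpha^{*},j_p}=\bigl(\conj{(\phi_p)_{j_p}}/\conj{(\phi_p)_{m_p}}\bigr)\I_{N_0}$, is precisely what repairs this: the diagonal couplings are genuinely $\I$, the zeros forced by the triangular structure coincide with zeros of the product state, and since every coupling is a scalar multiple of $\I$, all matrices entering \cref{eq:ssppt} are diagonal at every level $n$, so all commutators vanish and the state is SSPPT. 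In short, your route buys strict conformity with the definition (and in doing so fixes a technical defect of the paper's construction) at the cost of the minimal-index bookkeeping, whereas the paper's route is shorter but, read literally, steps outside its own normal form. (A cosmetic remark: you interchanged the labels ``if'' and ``only if,'' but both implications are argued.)
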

  \begin{proof}
    Since any pure PPT state is separable, it suffices to prove that pure product state is  indeed SPPT states.
    Let $\rho$ be a pure state in $N_1\otimes \cdots N_d\otimes N_0$ system. Hence $\rho$ can be written as 
    \begin{equation}
      \label{eq:2}
      \begin{split}
        \rho &= v \adj v,
      v = \adj w, \\
     w &=(\otimes_{i=1}^{d}w_i)\otimes w_0,\\
      w_p & = (w_{p,1},w_{p,2},\ldots,w_{p,N_p}),1\leqslant p\leqslant d.
      \end{split}
    \end{equation}
    Let $\alpha_1 = (1,1,\ldots,1) $ and $X_{\alpha_1}$ be a $N_0\times N_0$ matrix whose first row is
    $w_0$ and all other entries are zeros. Consider $w$ as a block vector with each block being a $N_0$ dimensional
    row vector, then the $\alpha$-entry of $w$ is $\prod_{p}^d w_{p,j_p} w_0$.

   Let $\beta  = (j_1,j_2,\ldots,j_d)$ and 
    \begin{equation}
      \label{eq:11}
      \begin{split}
        S_{\alpha_1,j_p}^p& =\mathrm{diag}( w_{p,j_p},0,\ldots,0).\\
      \end{split}
    \end{equation}
   For any other $\alpha\neq \alpha_1$, let $S_{\alpha,j_p}^p = 0$ and $X_{\alpha}=0$.  Then we can write $\rho$ as $\rho = \adj X X$, where
    the  $(\alpha,\beta)$-th entry of $X$ is
    \begin{equation}
      \label{eq:47}
        \prod_{p=1}^{d}S_{\alpha,j_p}^p X_{\alpha},      
      \end{equation}
      which has the same structure as in the definition of SPPT. Moreover, $S_{\alpha,p}^p$ here are all commuting
      normal matrices, hence it is  SSPPT.\@
    \end{proof}

    We end this subsection by giving another example in the multipartite system.
    \begin{example}
      It was proved that any PPT state  in $N_1 \otimes \cdots\linebreak\otimes  N_d\otimes N_0$ is separable~\cite{Wang2005a} if
      \begin{equation}
        \label{multiSPPTv0.6:eq:12}
        \begin{split}
          \rank{\rho} & = 
          \rank{ \bra{0_1,0_2,\ldots,0_d} \rho \ket{0_1,0_2,\ldots,0_d}} \\
          &= N_0.
          \end{split}
      \end{equation}
    It then has a canonical form by using a reversal local operator:
    \begin{equation}
      \label{multiSPPTv0.6:eq:13}
      \rho = \adj T  T,
    \end{equation}
    where
    \begin{align*}T =& (D_1^1,D_2^1,\ldots,D_{N_1}^1)\otimes (D_1^2,D_2^2,\ldots,D_{N_1}^2)\cdots \\
     & \otimes (D_1^d,D_2^d,\ldots,D_{N_d}^{d}).
      \end{align*}
    Here $D_1^i = \I$ and
    $D_p^{q}$ are a set of mutually commuting normal matrices.
  \end{example}
  Suppose that  $X$ has the form as \cref{eq:34}.
    Let $\alpha_0 = (1,1,\ldots,1)$ be a $d$-tuple.
    Put $S_{\alpha_0,j_p}^p = D_{j_p}^p$ for any $p$. And $S_{\alpha,j_p}^p = 0$ for all other $\alpha\neq
    \alpha_{0}$. And $X_{\alpha_0} = \I$. Simple calculation follows that $\rho = \adj X X$. Note that
    $S_{\alpha,j_p}^p$ are all mutually normal commuting, hence it is SSPPT.\@

    \section{Sufficient separability conditions of SPPT states}%
  \label{sec:suff-cond-sppt}

  In this section, we will consider the separability  conditions  for SPPT state.

  Let $\rho$ be a density matrix in $2\otimes d$ system with a block Cholesky decomposition,
  \begin{equation}
    \label{2d}
    \begin{split}
      \rho &= \adj{X}X,\\
      X &=
      \begin{pmatrix}
        X_1 & S X_1\\
        0 & X_2
      \end{pmatrix}.
    \end{split}
  \end{equation}
  Note that $\rho$ is SPPT if
  \begin{equation}
    \label{multiSPPTv0.6:eq:9}
    \adj X_1 (S_1\adj S_1-\adj S_1 S_1) X_1 = 0.
  \end{equation}
  It has been proved that SPPT states in $2\otimes 4$ system is separable~\cite{ha2013separability}. In fact we have the following conclusion.
  \begin{lem}\label{lem:2} Let $\rho$ be an SPPT state of the form as \cref{2d}. Then 
    $\rho$ is separable in either of the following cases
    \begin{enumerate}
      \item $d \leqslant 4$;
      \item $\rank{X_1} = d$.
    \end{enumerate}
  \end{lem}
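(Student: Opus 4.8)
The plan is to handle the two cases by genuinely different arguments: case (ii) follows from a short structural observation, whereas case (i) rests on known separability results in low dimensions.

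For case (ii) I would begin from the SPPT identity \eqref{multiSPPTv0.6:eq:9}, $\adj X_1 (S\adj S-\adj S S)X_1=0$. Here $X_1$ is a $d\times d$ block, and the hypothesis $\rank{X_1}=d$ makes it invertible; multiplying on the left by $(\adj{X_1})^{-1}$ and on the right by $X_1^{-1}$ gives at once $S\adj S-\adj S S=0$, i.e.\ $S$ is normal. For the $2\otimes d$ block form in \cref{2d} the relation $[S,\adj S]=0$ is exactly the (bipartite) SSPPT condition, so $\rho$ is an SSPPT state and is therefore separable by the result that every SSPPT state is separable. If an explicit decomposition is wanted, I would diagonalize $S=U\Lambda\adj U$ with $U$ unitary and $\Lambda$ diagonal, substitute into $X$, and read off product vectors row by row, exactly as in the proof that SSPPT states are separable.

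For case (i) I would split on $d$. When $d\leqslant 3$ I would use only that $\rho$ is PPT --- which is immediate, since by definition the SPPT condition exhibits $\rho^{\intercal_1}$ as a Gram matrix $\adj Z Z\geqslant 0$ --- together with the sufficiency of the Peres--Horodecki criterion on $2\otimes 2$ and $2\otimes 3$, whence $\rho$ is separable. When $d=4$, PPT no longer suffices, so I would invoke the known theorem that every SPPT state on $2\otimes 4$ is separable~\cite{ha2013separability}.

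The main obstacle is precisely the subcase $d=4$ of case (i). It sits at the borderline dimension where the PPT argument breaks down (there already exist entangled $2\otimes 5$ SPPT states), so it cannot be reached by the elementary PPT reasoning and must rely on the nontrivial $2\otimes 4$ separability result. By contrast, case (ii) is essentially immediate once invertibility of $X_1$ is used; the only point to verify carefully is that the block commutator condition $[S,\adj S]=0$ really does coincide with the SSPPT condition for \cref{2d}, so that the SSPPT separability theorem applies verbatim.
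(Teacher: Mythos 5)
Your proof is correct and matches the argument the paper intends: the paper states this lemma without an explicit proof, presenting it as a consequence of the immediately preceding citation of Ha's result~\cite{ha2013separability} that all $2\otimes 4$ SPPT states are separable, together with standard facts. Your case analysis (Peres--Horodecki for $d\leqslant 3$, Ha's theorem for $d=4$, and for invertible $X_1$ cancellation of $X_1$ in \eqref{multiSPPTv0.6:eq:9} giving $[S,\adj S]=0$, hence SSPPT and thus separable) fills in exactly those steps; indeed the normality-of-$S$ mechanism in your case (ii) is the same one the paper uses to prove the subsequent strengthened lemma with $\mathrm{Im}(S)\subset \mathrm{Im}(X_1)$.
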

  The second condition can be further improved as follows. 
  \begin{lem}
    Let $\rho$ be an SPPT state of the form as \cref{2d}. Then $\rho$ is separable if $\mathrm{Im}(S)\subset \mathrm{Im}(X_1)$
    or $\mathrm{Im}(\adj S)\subset \mathrm{Im}(X_1)$.
  \end{lem}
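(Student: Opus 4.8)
The plan is to reduce both hypotheses to the single statement that $S$ is normal, and then to conclude separability exactly as in the full-rank case (item~(2) of \cref{lem:2}): diagonalizing a normal $S$ by a unitary and pulling that unitary out of the Cholesky factor turns every row of the factor into a product vector. First I would recast the SPPT constraint geometrically. Let $P$ be the orthogonal projection onto $\range{X_1}$. Since $P$ projects exactly onto the range of $X_1$, one has $\adj X_1 M X_1 = 0$ if and only if $PMP=0$; applying this to $M=[S,\adj S]$ turns \eqref{multiSPPTv0.6:eq:9} into
\begin{equation*}
  P[S,\adj S]P = 0.
\end{equation*}
The two hypotheses become $PS=S$ (Case~A, when $\mathrm{Im}(S)\subset\mathrm{Im}(X_1)$) and $SP=S$ (Case~B, when $\mathrm{Im}(\adj S)\subset\mathrm{Im}(X_1)$, i.e.\ $P\adj S=\adj S$).

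The decisive step, and the one I expect to be the main obstacle, is that the image hypothesis alone does \emph{not} make $S$ normal; one must also extract the vanishing of the complementary piece of $S$, which I would do by a trace argument. Taking the trace of $P[S,\adj S]P=0$ and using $P^2=P$ with the cyclicity of $\trace$ gives $\norm{PS}_F^2=\norm{SP}_F^2$. Combining this with the splitting $\norm{S}_F^2=\norm{SP}_F^2+\norm{S(\I-P)}_F^2$ (and its mirror image $\norm{S}_F^2=\norm{PS}_F^2+\norm{(\I-P)S}_F^2$), in Case~A the equality $PS=S$ forces $\norm{S}_F^2=\norm{SP}_F^2$, hence $S(\I-P)=0$, i.e.\ $SP=S$ as well; Case~B is symmetric and yields $PS=S$. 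Either way both projections fix $S$, so $S=PSP$. Feeding $S=PSP$ back into $P[S,\adj S]P=0$ collapses every projection and leaves $[S,\adj S]=0$, so $S$ is normal.

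Once $S$ is normal I would finish as in the cited full-rank argument. Write $S=U\Lambda\adj U$ with $U$ unitary and $\Lambda$ diagonal (possibly with zero entries), and set $G=\mathrm{diag}(U,\I)$, a unitary on $\complex^2\otimes\complex^d$. Left multiplication by a unitary does not change $\rho=\adj XX$, so $\rho=\adj{(\adj G X)}(\adj G X)$ with
\begin{equation*}
  \adj G X=
  \begin{pmatrix}
    \adj U X_1 & \Lambda\adj U X_1\\
    0 & X_2
  \end{pmatrix}.
\end{equation*}
Each top row is then $(1,\lambda_i)\otimes(\adj U X_1)_{i}$ and each bottom row is $(0,1)\otimes (X_2)_{i}$, so every row of $\adj G X$ is a product vector. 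Hence $\rho$ is a sum of projections onto product vectors, and is therefore separable. I note that the only place the singularity of $X_1$ enters is through $P\neq\I$; when $\rank{X_1}=d$ the projection is trivial and $S$ is immediately normal, recovering item~(2) of \cref{lem:2}.
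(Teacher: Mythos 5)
Your proposal is correct and follows essentially the same route as the paper's own proof: both arguments show that the range hypothesis together with the SPPT condition forces $S$ to be normal and then conclude separability from normality, and your trace/Pythagoras step is exactly the paper's block computation (in the SVD basis of $X_1$, where your projection $P$ becomes $\mathrm{diag}(\I,0)$, the identity $S_1\adj{S_1}+S_2\adj{S_2}=\adj{S_1}S_1$ forces $S_2=0$ and $S_1$ normal). The only differences are cosmetic: the paper works in SVD coordinates and reduces the case $\mathrm{Im}(\adj{S})\subset\mathrm{Im}(X_1)$ to the other via the partial transpose $\rho^{\intercal_1}$, whereas you treat both hypotheses symmetrically and basis-free, and you spell out the normality-implies-separability finish that the paper handles by citation.
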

  \begin{proof}
    It suffices to prove  for the case $\mathrm{Im}(S)\subset\mathrm{Im}(X_1)$ since otherwise we can consider the
    partial transposed state $\rho^{\intercal_1}$.

   Suppose $\rank{X}< d$, then it has a SVD decomposition,
    \begin{equation}
      \label{eq:50}
      X_1 =  U \Lambda\adj V=
      U
      \begin{pmatrix}
        \Sigma&0\\
        0&0
      \end{pmatrix}
      \adj V ,
    \end{equation}
    where $U,V$ are unitary matrices and $\Sigma $ is a diagonal matrix with dimension less than $d$.
    Let
    \begin{equation}
      \label{multiSPPTv0.6:eq:10}
      \begin{split}
        \sigma &= \adj Y Y,\\
        Y &=
        \begin{pmatrix}
          \Gamma & \adj U S U \Gamma\\
          0 & X_2 V 
        \end{pmatrix}.
        \end{split}
      \end{equation}
      Then $\rho = (\I \otimes V)\sigma (\I \otimes \adj V)$. Simple calculation gives that $\sigma$ is also SPPT.\@
     Write $S$ in block matrix form according to that of  $\Lambda$,
    \begin{equation}
      \label{eq:52}
      S =
      \begin{pmatrix}
        S_1&S_2\\
        S_3&S_4
      \end{pmatrix}.
    \end{equation}
    
    Note that
    \[\mathrm{Im}(S)\subset \mathrm{Im}(X_1) \Leftrightarrow\mathrm{Im}(\adj US U)\subset
      \mathrm{Im}(\Lambda),\]
    it  follows that
    $S_3=0,S_4=0$. By  the condition~\eqref{multiSPPTv0.6:eq:9}, we have $S_4=0$ and
    $S_1$ is normal. Therefore $S$  is normal, which implies  $\rho$ is separable.
  \end{proof}
  Another sufficient condition for separability of $2\otimes d $ SPPT state was given in Ref.~\cite{Guo2012}.
  \begin{lem}%
  \label{AgreaterD}
    Let \begin{equation} \rho =
      \begin{pmatrix}
        A & B\\
        \adj B & D
      \end{pmatrix}
      \label{abd}
    \end{equation}
    be a density matrix in the $2\otimes d $ system.  If $A>D$, then $\rho$ is SSPPT and thus separable.
  \end{lem}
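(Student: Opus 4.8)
The plan is to prove the stronger assertion that $\rho$ is SSPPT and then quote the established fact (see Ref.~\cite{Bylicka2013} and the bipartite specialization of the theorems above) that every SSPPT state is separable. Throughout I read $\rho$ as an SPPT state of the block-Cholesky form \eqref{2d}, as the surrounding discussion of $2\otimes d$ SPPT states indicates: this hypothesis is genuinely needed, since a mere density matrix with $A>D$ can be entangled (choose $\rho\ge 0$ whose off-diagonal block satisfies $\adj B A^{-1}B\le D$ but $B A^{-1}\adj B\not\le D$, so that $\rho$ fails PPT). Hence the entire content is to upgrade the SPPT relation to the sharper SSPPT relation $S\adj S-\adj S S=0$, i.e.\ to show that $S$ is normal.

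First I would observe that $\rho\ge 0$ forces $D\ge 0$, so $A>D\ge 0$ gives $A>0$; in particular $A=\adj{X_1}X_1$ is invertible, hence the $d\times d$ block $X_1$ has trivial kernel and is invertible. This is the single place where the strict inequality $A>D$ (rather than $A\ge D$) is used. Next I would invoke the SPPT relation in the form \eqref{multiSPPTv0.6:eq:9}, namely $\adj{X_1}\,(S\adj S-\adj S S)\,X_1=0$. Multiplying on the left by $(\adj{X_1})^{-1}$ and on the right by $X_1^{-1}$, both of which exist by the previous step, collapses this to $S\adj S-\adj S S=0$, so $S$ is normal. For $N_1=2$ this is precisely the SSPPT condition \eqref{eq:43}, since $S=S_{12}$ is the only nontrivial off-diagonal block; therefore $\rho$ is SSPPT, and separability follows at once from the theorem that every SSPPT state is separable.

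The delicate point, and the place I expect the real subtlety to hide, is exactly the passage from the quadratic congruence $\adj{X_1}M X_1=0$ to the bare identity $M=[S,\adj S]=0$: this is legitimate only because $A>0$ makes $X_1$ invertible. Were $A$ merely positive semidefinite, a nonzero Hermitian $M$ supported on $\ker X_1$ could be annihilated by the congruence, and this loophole is exactly what permits entangled SPPT states once $d\ge 5$; so I would make the invertibility step fully explicit. If one instead wants the conclusion for a merely PPT (not necessarily SPPT) $\rho$, the clean argument must be replaced by an explicit construction: after the local filter $\I\otimes A^{-1/2}$ one reduces to $\tilde\rho=\bigl(\begin{smallmatrix}\I & S_0\\ \adj S_0 & \tilde D\end{smallmatrix}\bigr)$ with $S_0=A^{-1/2}BA^{-1/2}$ and $\tilde D<\I$, and one builds a \emph{normal} dilation of the contraction $S_0$ whose defect blocks are governed by $\tilde D-\adj S_0 S_0\ge 0$ and $\tilde D-S_0\adj S_0\ge 0$ (the inequalities expressing $\rho\ge 0$ and PPT). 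Verifying the normality of that dilation is where the genuine work lies in the more general route, whereas under the SPPT reading the invertibility argument settles the lemma directly.
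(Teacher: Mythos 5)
There is no in-paper proof to measure you against: \cref{AgreaterD} is stated without proof, being quoted from Ref.~\cite{Guo2012} (the paper introduces it as ``another sufficient condition for separability of $2\otimes d$ SPPT state''), and the only argument supplied in this section belongs to the subsequent, relaxed lemma. So your attempt must be judged on its own merits, and on those merits it is correct. Your decision to read the hypothesis as ``$\rho$ is an SPPT state presented in the form \eqref{2d}'' is the right one: it matches the paper's framing sentence, and it is genuinely necessary, exactly as your parenthetical counterexample recipe claims. Concretely, $A=\I_2$, $B$ with the single nonzero entry $B_{12}=1/\sqrt2$, and $D=\mathrm{diag}(1/4,1/2)$ gives $\rho\geqslant 0$ (since $D\geqslant \adj{B}B$) with $A>D$, yet $D\not\geqslant B\adj{B}$, so $\rho$ fails PPT and is entangled; hence the lemma is false for bare density matrices. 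Under your reading, the chain is complete and each link justified: $D\geqslant 0$ as a principal block of $\rho$, so $A>D$ forces $A=\adj{X_1}X_1>0$ and $X_1$ invertible; cancelling $X_1$ in \eqref{multiSPPTv0.6:eq:9} yields $[S,\adj S]=0$; normality of $S$ is precisely condition \eqref{eq:43} for $N_1=2$; and separability follows from the SSPPT separability theorem of Ref.~\cite{Bylicka2013}.

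Two caveats are worth recording. First, as you yourself note, your argument uses only the nonsingularity of $A$, never the full strength of $A>D$; under your reading the lemma is really ``SPPT with nonsingular $A$ implies SSPPT,'' with $A>D$ serving merely as a sufficient condition for $A>0$. Say this explicitly, since it signals that Ref.~\cite{Guo2012} may package the result differently. Second, delete, or clearly mark as speculation, your closing paragraph about handling merely PPT states via a normal dilation of the contraction $S_0$: that route cannot recover the lemma's stated conclusion, because PPT together with $A>D$ does \emph{not} imply SSPPT. Indeed, the same $A$ and $B$ as above with $D=\I_2/2$ is a PPT (hence, in $2\otimes 2$, separable) state with $A>D$, yet its canonical $S=A^{-1/2}BA^{-1/2}=B$ is not normal, and since $A$ is invertible the decomposition \eqref{2d} is unique up to unitary conjugation of $S$, so no choice of decomposition restores normality. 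The SPPT hypothesis is thus not a convenience but the essential input, and your main argument, which keeps it, stands.
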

  Noted that, when $\rho$ is written in \cref{2d}, the sufficient condition in the above lemma is equivalent to
  \begin{equation*}
    \adj{X_1}X_1 > \adj{X_1}\adj{S}SX_1 +\adj X_2X_2.
  \end{equation*}
  We can further relax the condition by
  \begin{lem}
    Let $\rho=\adj XX$ with $X$ being of the form as \cref{2d} in $2\otimes d$ system. Then $\rho$ is separable if
    \[\adj X_1 X_{1}> \adj X_1\adj S SX_1.\]
  \end{lem}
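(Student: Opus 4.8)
The plan is to show that the strict operator inequality in the hypothesis forces $X_1$ to be invertible, and that once this is known the separability follows immediately from the SPPT structure. First I would note that $\adj X_1 \adj S S X_1 = \adj{(SX_1)}(SX_1)\geqslant 0$, so the hypothesis $\adj X_1 X_1 > \adj X_1 \adj S S X_1$ gives $\adj X_1 X_1 > 0$; that is, $\adj X_1 X_1$ is strictly positive definite. Hence $X_1$ has trivial kernel, so $\rank{X_1}=d$. This is exactly the second hypothesis of \cref{lem:2}, and one route is simply to invoke that lemma to conclude that $\rho$ is separable.

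I would prefer, however, to display the underlying mechanism, as it also explains why the term $\adj X_2 X_2$ occurring in \cref{AgreaterD} can be discarded. Since $\rho$ is SPPT, the relation~\eqref{multiSPPTv0.6:eq:9}, which here reads $\adj X_1 [S,\adj S] X_1 = 0$, holds. Because $X_1$ is now invertible, this forces $[S,\adj S]=0$, so $S$ is normal. For the bipartition $2:d$ this is precisely the SSPPT condition (cf.~\eqref{eq:43}), and every SSPPT state is separable~\cite{Bylicka2013}; therefore $\rho$ is separable. Neither $X_2$ nor the sharper information $\adj S S<\I$ carried by the hypothesis is used beyond securing $\rank{X_1}=d$, which is exactly why the condition of \cref{AgreaterD} can be weakened by dropping $\adj X_2 X_2$.

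The only delicate point is the first step: the whole argument rests on converting the strict inequality into invertibility of $X_1$, and on the fact that the SPPT hypothesis, not merely the displayed inequality, is indispensable. Indeed, abandoning normality of $S$ one can exhibit rank-$d$ states of the form~\eqref{2d} that satisfy $\adj X_1 X_1 > \adj X_1 \adj S S X_1$ yet are not PPT, and hence are entangled, so the SPPT assumption cannot be omitted. Beyond recognising these two points there is no genuine computational obstacle.
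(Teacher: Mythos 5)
Your proof is correct, but it is a genuinely different argument from the paper's --- and, as it happens, it is the only one of the two that actually works. The paper proves the lemma by splitting $\rho$ into a block-Toeplitz part, with both diagonal blocks equal to $\adj X_1 \adj S S X_1$ and off-diagonal blocks $\adj X_1 S X_1$, $\adj X_1 \adj S X_1$, plus the block-diagonal remainder $\mathrm{diag}(\adj X_1 X_1-\adj X_1 \adj S S X_1,\; \adj X_2 X_2)$; the hypothesis makes the remainder positive (hence separable), and the Toeplitz part is claimed to be positive and therefore separable by Proposition 1 of Ref.~\cite{Gurvits2002}. Note that this argument never invokes the SPPT property. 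You instead spend the hypothesis only on obtaining $\adj X_1 X_1>0$, i.e. $\rank{X_1}=d$, and then lean on SPPT: relation~\eqref{multiSPPTv0.6:eq:9} together with invertibility of $X_1$ forces $[S,\adj S]=0$, so $\rho$ is SSPPT (cf.~\eqref{eq:43}) and separable by Ref.~\cite{Bylicka2013}; equivalently one may quote case 2 of \cref{lem:2}.

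The comparison is not symmetric, because the paper's Toeplitz block need not be positive. Take $X_1=\I$, $S=\tfrac12\I$, $X_2=0$: this is an SSPPT state satisfying the hypothesis, yet the paper's Toeplitz part is
\begin{equation*}
  \begin{pmatrix}
    \tfrac14\I & \tfrac12\I\\
    \tfrac12\I & \tfrac14\I
  \end{pmatrix},
\end{equation*}
which has eigenvalue $-\tfrac14$. In fact no splitting of this $\rho$ into a positive Toeplitz part plus a positive block-diagonal part exists at all: positivity of a Toeplitz part with off-diagonal blocks $\tfrac12\I$ forces its diagonal blocks to be $\geqslant\tfrac12\I$, while positivity of the remainder forces them to be $\leqslant\tfrac14\I$. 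Your closing remark is exactly the right diagnosis of why the paper's route cannot be repaired: since it nowhere uses SPPT, it would prove that the displayed inequality alone suffices, and that statement is false --- e.g. $X_1=\I$, $X_2=0$ and $S$ a small nonzero nilpotent matrix give a state of the form~\eqref{2d} satisfying $\adj X_1 X_1>\adj X_1\adj S S X_1$ that is not even PPT. So, under the reading that $\rho$ is SPPT (the reading forced by the section title, by the corollary that follows, and by your counterexample), your argument is the one that actually establishes the lemma, whereas the paper's decomposition proof contains a genuine positivity gap.
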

  \begin{proof}
    Now $\rho$ can be written as
    \begin{equation*}
      \begin{split}
        \rho & =
        \begin{pmatrix}
          \adj X_1 \adj SSX_1 & \adj X_1 S X_1\\
          \adj X_1 \adj S X_1 &            \adj X_1 \adj SSX_1 \\
        \end{pmatrix}
        \\
        &\quad+
        \begin{pmatrix}
          X_1\adj X_1 -     \adj X_1 \adj SSX_1 &0\\
          0 & \adj X_2 X_2\end{pmatrix}.
      \end{split}
    \end{equation*}
    The former term is a positive Toeplitz block matrix which is separable by the Proposition 1 in
    Ref.~\cite{Gurvits2002}. Since the latter term is separable, $\rho$ is separable.
  \end{proof}
  To sum up the conditions in term of $S$, we have
  \begin{corollary}
    Let $\rho$ be an SPPT state of the form as \cref{2d}. Then $\rho$ is separable if $S$ is in any of the following casesa
    \begin{enumerate}
      \item $S$ is contractive,           
      \item $S$ is normal,
      \item $\mathrm{Im}(S)\subset \mathrm{Im}(X_1)$,
      \item Dimension of $S$ is  less than or equal to $4$.
    \end{enumerate}
  \end{corollary}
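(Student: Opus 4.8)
The plan is to treat the corollary as a consolidation of the criteria just established and to verify the four listed conditions one at a time, each by reduction to a result already proved. Three of them are essentially immediate. If the dimension of $S$ is at most $4$, then $d\leqslant 4$ and separability is exactly \cref{lem:2}(1). If $\mathrm{Im}(S)\subset\mathrm{Im}(X_1)$, this is precisely the hypothesis of the image lemma proved just above (its proof also covers $\mathrm{Im}(\adj S)\subset\mathrm{Im}(X_1)$ by passing to $\rho^{\intercal_1}$). If $S$ is normal, then $[S,\adj S]=0$, so the defining relation of SSPPT for the $2\otimes d$ block form \cref{2d} holds automatically; hence $\rho$ is SSPPT, and every SSPPT state is separable. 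Thus the only case needing real work is the contractive one.

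For that case I would first record the underlying operator inequality: $S$ contractive means $\adj S S\leqslant\I$, whence $\adj X_1(\I-\adj S S)X_1\geqslant 0$, i.e. $\adj X_1 X_1\geqslant\adj X_1\adj S S X_1$. When this is strict the preceding lemma (separability under $\adj X_1 X_1>\adj X_1\adj S S X_1$) applies verbatim, so the whole difficulty is the boundary. I would remove it by approximation, using that the set of separable states is closed. Concretely, replace $X_1$ by the perturbation $X_1+\tfrac1n\I$, which is invertible for all large $n$, and replace $S$ by the strictly contractive $(1-\epsilon)S$. The resulting state $\rho_{n,\epsilon}$ is still of the form \cref{2d}, and it now satisfies the strict inequality $\adj{(X_1+\tfrac1n\I)}(\I-(1-\epsilon)^2\adj S S)(X_1+\tfrac1n\I)>0$, so the lemma gives its separability. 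Letting $n\to\infty$ and then $\epsilon\to 0$ yields $\rho_{n,\epsilon}\to\rho$ in Hilbert--Schmidt norm, and closedness of the separable set forces $\rho$ to be separable.

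I expect the contractive boundary to be the only genuine obstacle. A strict inequality $\adj X_1 X_1>\adj X_1\adj S S X_1$ can never hold when $X_1$ is rank-deficient, since both sides vanish on $\kernel{X_1}$, nor when $S$ has a unit singular value; so the lemma cannot be invoked directly at the boundary and the approximation is essential. Two points then need care: the perturbed matrices must retain the block structure \cref{2d} so that the lemma is applicable (and here it is important that this lemma, unlike \cref{lem:2}, does not require the approximants $\rho_{n,\epsilon}$ themselves to be SPPT), and the double limit must genuinely converge to $\rho$. The other three cases require no such care and are dispatched by direct citation.
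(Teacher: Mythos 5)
Your handling of cases 2, 3 and 4 is correct and is exactly the paper's (implicit) proof: the corollary carries no separate argument in the paper---it is announced with ``to sum up the conditions in terms of $S$''---and its intended justification is precisely the three citations you give (normality of $S$ makes $\rho$ SSPPT hence separable, the image lemma covers $\mathrm{Im}(S)\subset\mathrm{Im}(X_1)$, and dimension at most $4$ is \cref{lem:2}(1)).

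Case 1 is where the genuine gap lies, and it sits exactly at the step you declared safe. Notice that your approximation argument never uses the hypothesis that $\rho$ is SPPT; it uses only the form \cref{2d} and contractivity of $S$. If it were sound, it would prove that \emph{every} state of the form \cref{2d} with contractive $S$ is separable. That is false already in $2\otimes 2$: take
\begin{equation*}
  X_1 = \I_2,\qquad X_2 = 0,\qquad S = \begin{pmatrix} 0 & c\\ 0 & 0 \end{pmatrix},\qquad 0<c<1 .
\end{equation*}
Then $\rho=\adj{X}X$ has the form \cref{2d}, $S$ is contractive, and even the strict inequality $\adj{X_1}X_1=\I_2>\mathrm{diag}(0,c^2)=\adj{X_1}\adj{S}SX_1$ holds; yet
\begin{equation*}
  \rho^{\intercal_1} = \begin{pmatrix} \I_2 & \adj{S}\\ S & \adj{S}S \end{pmatrix}
\end{equation*}
has Schur complement $\adj{S}S-S\adj{S}=\mathrm{diag}(-c^2,c^2)$, which is not positive semidefinite, so $\rho$ is not even PPT. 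This same example shows that the claim you lean on---that the last lemma before the corollary ``does not require the approximants to be SPPT''---is untenable: stripped of the SPPT hypothesis, that lemma is simply false. Read as intended (inside a section where $\rho$ is assumed SPPT), the lemma is true, but for a degenerate reason: its strict inequality forces $X_1$ to be injective, whereupon \cref{multiSPPTv0.6:eq:9} forces $[S,\adj{S}]=0$; it is thus subsumed by your case 2. And your approximants never satisfy \cref{multiSPPTv0.6:eq:9} unless $S$ was normal to begin with: since $X_1+\frac{1}{n}\I$ is invertible, $\adj{(X_1+\frac{1}{n}\I)}(S\adj{S}-\adj{S}S)(X_1+\frac{1}{n}\I)=0$ is equivalent to $[S,\adj{S}]=0$. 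So the lemma can never legitimately be applied to $\rho_{n,\epsilon}$ in the cases that matter.

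Finally, the gap cannot be closed by a better argument, because case 1 of the corollary is itself false. The SPPT condition \cref{multiSPPTv0.6:eq:9} is homogeneous of degree two in $S$, so replacing $S$ by $S/\norm{S}$ in Ha's entangled state \cref{eq:14} (with $X_2=0$) yields again an SPPT state of the form \cref{2d}, now with contractive $S$; and the rescaled state equals $(D\otimes\I_5)\,\rho\,(D\otimes\I_5)$ with $D=\mathrm{diag}(1,1/\norm{S})$, an invertible local filtering of Ha's state, hence still entangled. In other words, the paper's own summary commits the same over-claim (it silently upgrades $\adj{S}S\leqslant\I$ to the strict inequality of the preceding lemma). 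Your instinct that the contractive boundary is the whole difficulty was right; but the boundary is not a technicality to be approximated away---it is where the statement stops being true.
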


  Kil-Chan Ha  constructed a  $2\otimes 5$ SPPT state which is entangled~\cite{Ha2012}. Here we study further about
  $2\otimes 5$ SPPT states. Before that we recall the definition of edge state.
  \begin{defi}
    Let  $\sigma$ be multipartite state. It is said to be an edge state if there does not exist $\ket{x,y}$ such that
  \begin{equation}
    \label{multiSPPTv0.6:eq:11}
    \begin{split}
    \ket{x,y} &\in \range{\sigma},\\
    \ket{\conj x ,y}&\in \range{\sigma^{\intercal_1}}.
    \end{split}
  \end{equation}
  \end{defi}
  
  \begin{thm}
    $\rho$ is an SPPT state in $2\otimes 5$ system of the form as \cref{2d}, then $\rho$ is separable except the following case:
    \begin{equation}
      \label{eq:53}
      \begin{cases}
         &\rank{X_1}=4,\\
          &\rank{\sigma} =\rank{\sigma^{\intercal_1}} =5,\\
          &\sigma \text{ is an edge state},
      \end{cases}
    \end{equation}
    where
    \begin{equation*}
      \begin{split}
        \sigma & = \adj{W}W,\quad
        W  =
        \begin{pmatrix}
          X_1 & SX_1\\
        \end{pmatrix}.
      \end{split}
    \end{equation*}
  \end{thm}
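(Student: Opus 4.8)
The plan is to reduce the whole statement to the auxiliary state $\sigma=\adj W W$ and then run a rank-based case analysis. First I would note that the only difference between $\rho=\adj X X$ and $\sigma=\adj W W$ sits in the lower-right block, so
\[
\rho=\sigma+\ket{2}\bra{2}\otimes\adj{X_2}X_2 ,
\]
where the added term is a product positive operator and hence separable. Thus $\rho$ is separable whenever $\sigma$ is, and since $\sigma$ is itself an SPPT (in particular PPT) state of the form \cref{2d} with $X_2=0$ satisfying the same condition \eqref{multiSPPTv0.6:eq:9}, it suffices to show that $\sigma$ is separable unless the three conditions in \eqref{eq:53} all hold.

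Next I would pin down the support of $\sigma$. A direct computation gives the reduced state $\trace_A\sigma=\adj{X_1}(\I+\adj S S)X_1$, and since $\I+\adj S S$ is invertible this has rank $r:=\rank{X_1}$; consequently $\sigma$ is supported on $\complex^2\otimes V$ with $V=\range{\adj{X_1}}$ and $\dim V=r$. Because $X_1$ is a column submatrix of both $W=(X_1\;SX_1)$ and $W'=(X_1\;\adj S X_1)$, I also get $\rank\sigma=\rank W\geqslant r$ and $\rank{\sigma^{\intercal_1}}=\rank{W'}\geqslant r$. This already disposes of two cases: if $r\leqslant 3$ then $\sigma$ is effectively a $2\otimes 3$ PPT state, separable by the Peres--Horodecki criterion~\cite{horodecki1996separability}, while if $r=5$ then $\rank{X_1}=d$ and $\sigma$ is separable by \cref{lem:2}. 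So only $r=4$ remains, where both $\rank\sigma$ and $\rank{\sigma^{\intercal_1}}$ lie in $\{4,5\}$.

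For $r=4$ the argument splits on these two ranks. If $\rank\sigma=4$, then $\sigma$ is a PPT state on the effective space $\complex^2\otimes V$ of rank equal to $\dim V=4$, hence separable by the rank theorem of Kraus \emph{et al.}~\cite{Kraus2000}; the same reasoning applied to $\sigma^{\intercal_1}$ (which shares the reduced $B$-state, so also has $B$-rank $4$) shows that $\rank{\sigma^{\intercal_1}}=4$ likewise forces separability of $\sigma$. There remains $\rank\sigma=\rank{\sigma^{\intercal_1}}=5$. If $\sigma$ is not an edge state there is a product vector $\ket{x,y}\in\range{\sigma}$ with $\ket{\conj x,y}\in\range{\sigma^{\intercal_1}}$; subtracting the maximal multiple $\Lambda\ket{x,y}\bra{x,y}$ that keeps both $\sigma$ and its partial transpose positive yields a PPT state whose support still lies in $\complex^2\otimes V$ and one of whose two ranks has dropped to $4$. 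Using the PPT rank bound $\rank\geqslant\max$ of the local ranks~\cite{horodecki2003rank}, this remainder has effective $B$-dimension at most $4$, so it is separable by the $2\otimes 3$ criterion or by Kraus \emph{et al.} as above; therefore $\sigma$ is separable. The only configuration left uncovered is $r=4$, $\rank\sigma=\rank{\sigma^{\intercal_1}}=5$ with $\sigma$ an edge state, which is precisely the exception \eqref{eq:53}.

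The hard part will be the last case, $r=4$ with both ranks equal to $5$: one must justify that the product-vector subtraction can be performed while preserving the PPT property and that it strictly lowers one of the two ranks, and one must track the effective $B$-dimension after subtraction so that the ``rank equals local dimension'' hypothesis of Kraus \emph{et al.} is genuinely met (falling back to the $2\otimes 3$ result when that dimension drops). This bookkeeping around the range/subtraction machinery carries all the weight, but it requires no idea beyond the standard theory of low-rank $2\otimes N$ PPT states.
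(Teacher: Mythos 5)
Your proposal is correct and takes essentially the same route as the paper's proof: the same reduction of $\rho$ to $\sigma$, and the same case analysis on $r=\rank{X_1}$ --- Peres--Horodecki for $r\leqslant 3$, \cref{lem:2} for $r=5$, and the Kraus \emph{et al.} rank-$N$ theorem when $\rank{\sigma}=4$ or $\rank{\sigma^{\intercal_1}}=4$ --- leaving exactly the exceptional case \eqref{eq:53}. The only divergence is the final step, where the paper simply cites the known dichotomy that a $2\otimes 4$ birank-$(5,5)$ PPT state is entangled if and only if it is an edge state, whereas you re-derive the direction you need via the standard product-vector subtraction lemma; this is sound and makes that step self-contained, at the cost of the bookkeeping you flag.
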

 
  \begin{proof}

    By \cref{lem:2}, $\rho$ is separable if it has full rank. Hence  we assume  $r=\rank{X_1}\leqslant 4$.
    Consider $\sigma$,  an SPPT state  supported in $2\otimes r$ subspace. If $r<4$, then it is separable by
    the Peres-Horodecki criterion. Therefore, $\rho$ is separable. We  are thus able to  assume $r=4$.

    By the PPT property, $\rank{\sigma}\geqslant r = 4$. And it is separable if $\rank{\sigma} = 4$ or
    $\rank{\sigma^{\intercal_1}}=4$. Hence we only need to consider the case when 
    $\rank{\sigma}=\rank{\sigma^{\intercal_1}}=5$.

    Since any $2\otimes 4$ birank$(5,5)$ state is entangled if and only if it is an edge state. Hence $\rho$ is separable
    when $\sigma$ is not an edge state, which completes our proof.
  \end{proof}

  Recall the $2\otimes 5$ SPPT entangled state in Ha's paper~\cite{ha2013separability},
  \begin{equation}
    \label{eq:14}
    X_1 =
    \begin{pmatrix}
      \I_4&0\\
      0&0
    \end{pmatrix}, S =
    \begin{pmatrix}
      0 & 1 & 0 & 0 & \beta_1\\
      0 & 0& 1& 0&0\\
      0 & 0& 0& 1 &0\\
      0 & 0& 0& 0& \beta_2\\
      \beta_2 & 0 & 0&\beta_1 & 0
    \end{pmatrix},
  \end{equation}
  where $\beta_1= \sqrt{(1-b)/2b}$ and $\beta_2 = \sqrt{ (1+b)/2b}$ with $0< b<1$. Put $X_2 = 0$.  Then the defined
  $\sigma $ is
  \begin{equation}
    \sigma = 
    \left(
    \begin{array}{ccccc|ccccc}
      1 & 0 & 0 & 0 &0   & 0 & 1 &0 & 0&0\\
      0 & 1 & 0 & 0 &0   & 0 & 0 & 1 &0&0\\
      0 & 0 & 1 & 0 &0   & 0 & 0& 0 & 1&0\\
      0 & 0 & 0 & 1 &0   & 0 & 0& 0 & 0&0\\
      0 & 0 & 0 & 0 &0   & 0 & 0& 0 & 0&0\\\midrule
      0 & 0 & 0 & 0 &0   & \gamma_1 & 0& 0 & \gamma_2&0\\
      1 & 0 & 0 & 0 &0   & 0 & 1& 0 & 0&0\\
      0 & 1 & 0 & 0 &0   & 0 & 0& 1 & 0&0\\
      0 & 0 & 1 & 0 &0   & \gamma_2 & 0& 0 & \gamma_1&0\\
      0 & 0 & 0 & 0 &0   & 0 & 0& 0 & 0&0\\
    \end{array}
    \right),
  \end{equation}
  where $\gamma_1 = (b+1)/2b, \gamma_2 = \sqrt{b^2-1}$. $\sigma $ is supported on $2\otimes 4$ subspace and is of birank
  $(5,5)$ state. By computing all the product vectors in range $\sigma$ and $\sigma^{\intercal_1}$ respectively, it
  follows that $\sigma$ is an edge state, which coincides with our theorem.

  Furthermore, we have studied the rank 4 SPPT state.
  
\begin{thm}
Any SPPT state of rank less than or equal to 4 is separable.
\end{thm}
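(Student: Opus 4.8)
The plan is to reduce separability to the behaviour of low-rank PPT states and then to use the block-Cholesky data to eliminate the single obstructive configuration. First I would record the rank bound implied by $\rho=\adj X X$ with $X$ of the form \cref{eq:34}: since $X$ is block upper triangular with diagonal blocks $X_\alpha$, one has $\rank{\rho}=\rank{X}\ge \sum_\alpha \rank{X_\alpha}$, so at most $\rank{\rho}\le 4$ of the $X_\alpha$ are nonzero, and a block row of $X$ is nonzero precisely when its $X_\alpha$ is. Restricting each party to the support of its reduced density matrix (which alters neither separability nor the PPT property) I may also assume each local rank is at most $4$. By the theorem that every SPPT state is PPT, $\rho$ together with all of its partial transposes is positive.

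Next I would clear the easy ranks. If $\rank{\rho}\le 3$ then $\rho$ is separable, because every PPT state of rank at most three is separable; the pure case $\rank{\rho}=1$ is also covered by the lemma that a pure state is SPPT if and only if it is separable. Thus the entire content lies in $\rank{\rho}=4$, which I would attack by the edge-state reduction already underlying the preceding $2\otimes 5$ theorem. Suppose, for contradiction, that $\rho$ is entangled. A PPT entangled state either is an edge state, or it carries a product vector $\ket{x,y}\in\range{\rho}$ with $\ket{\conj x,y}\in\range{\rho^{\intercal_1}}$; in the latter case subtracting a suitable multiple $\lambda\ket{x,y}\bra{x,y}$ leaves a PPT state of rank $3$, which is separable, forcing $\rho$ itself to be separable --- a contradiction. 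Hence the only surviving possibility is that $\rho$ is a rank-$4$ PPT entangled edge state, the genuinely dangerous configuration that does occur for ordinary PPT states already in $3\otimes 3$.

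The heart of the argument, and the step I expect to be the main obstacle, is therefore to show that a rank-$4$ SPPT state can never be such an edge state. Here I would exploit the Cholesky structure directly: the at-most-four active block rows of $X$, together with the consistency relations \cref{eq:35} tying each product $\prod_p S^p_{\alpha,j_p}$ to the adjoint of $\prod_q S^q_{\alpha,j'_q}$, should force enough normality and commutativity among the few active $S$-products that one can diagonalize them simultaneously and read off a globally product vector of the required conjugate form in $\range{\rho}$ --- exactly the mechanism by which the rows of $\tilde X$ become product vectors in the SSPPT separability proofs. In each resulting case I would either reduce to a bipartite $2\otimes d$ SPPT state with $d\le 4$ and invoke \cref{lem:2}, or reduce the rank below $4$ and finish by the rank-$\le 3$ result. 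Establishing that the rank-$4$ budget always compels this commuting structure, rather than permitting a $3\otimes 3$-style edge configuration, is the delicate point on which the whole argument turns.
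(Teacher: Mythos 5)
There is a genuine gap, and it sits exactly where you acknowledge it does: the ``heart of the argument'' is never carried out. Your handling of rank $\leqslant 3$ is fine, but for rank $4$ the proposal only expresses the hope that the SPPT relations \cref{eq:35} ``force enough normality and commutativity'' to yield a product vector in $\range{\rho}$. They cannot: the conditions~\eqref{eq:35} constrain only the sums $\sum_{\alpha}\adj{X_{\alpha}}[\,\cdot\,,\,\cdot\,]X_{\alpha}$, i.e., commutators sandwiched between the blocks $X_{\alpha}$, not the $S$-matrices themselves --- that is precisely the difference between SPPT and SSPPT. Indeed, Ha's entangled $2\otimes 5$ SPPT state recalled in \cref{eq:14} shows that no commuting structure follows from SPPT alone, so the rank-$4$ hypothesis must be exploited in a concrete, block-by-block way, and your outline supplies no mechanism for doing so. There is also a technical flaw in your edge-state reduction: when a PPT entangled $\rho$ is not an edge state, subtracting the maximal admissible multiple of $\ket{x,y}\bra{x,y}$ lowers either $\rank{\rho}$ or $\rank{\rho^{\intercal_1}}$, not necessarily the former, so you cannot conclude the remainder has rank $3$; the correct conclusion is only that $\rho$ splits as a separable part plus an edge state, and that edge state need not inherit the SPPT structure of $\rho$, so nothing about SPPT states follows.

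The paper closes the gap with two ingredients your proposal lacks. First, it invokes the classification of Refs.~\cite{Chen2011,Chen2013a}: a rank-$4$ PPT state is separable in every system except $2\otimes 2\otimes 2$ and $3\otimes 3$, and in those two systems it is separable if and only if $\range{\rho}$ contains a product vector. This removes all other dimensions at a stroke and replaces the edge-state dichotomy by a concrete target. Second, in each residual system it produces the product vector by direct case analysis on the zero pattern and ranks of the diagonal blocks of $X$: in $2\otimes 2\otimes 2$ (with $X$ as in \cref{eq:16},~\eqref{eq:19}) the cases $Y_{22}\neq 0$, $\rank{Y_{21},T_2Y_{21}}=2$, $\rank{Y_{21},T_2Y_{21}}=1$ (via an SVD computation), and $Y_{12}\neq 0$ each give a product vector, while $Y_{12}=Y_{21}=Y_{22}=0$ contradicts $\rank{\rho}=4$; in $3\otimes 3$ the cases $X_3\neq 0$, $X_1$ of full rank (forcing $(X_2,S_{23}X_2)$ to be rank one), and $\rank{X_1}=2$ are handled using the fact that any two-dimensional subspace of a $2\otimes 2$ system contains a product vector~\cite{Sanpera1997}. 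Your proposal would need arguments of exactly this kind; as written, it establishes the statement only for ranks $\leqslant 3$.
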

\begin{proof}
  Since all the rank $1,2$, and $3$ PPT states are separable, it only  remains to consider the  rank $4$ states.
  It has been proved in Refs.~\cite{Chen2011,Chen2013a} that  any rank four PPT state is   separable  
  except in the $2\otimes 2\otimes 2$ and $3\otimes 3$ systems. And for these systems, the state is separable if and only
  if its range  contains a product vector.
  Therefore, it suffices to prove that $\range{\rho}$ contains a product vector in 
  $2\otimes 2 \otimes 2$ and $3\otimes 3$ systems respectively. 
  
  Firstly, we consider the $2\otimes 2\otimes 2$ case.  Let $\rho=\adj X X$ be an SPPT state in $2\otimes 2 \otimes 2$
  system, where $X$ satisfies conditions  \eqref{eq:16} and \eqref{eq:19}.  Note
  that $\rho$ has a product vector in its range is equivalent to that $X$ has a product vector in its row range.

  If $Y_{22}\neq 0$, then $\range{X^{\intercal}}$ contains a  product vector.

  Let $\rho'=\adj{(Y_{21},T_2Y_{21}) }(Y_{21},T_2Y_{21})$, then $\rho'$ is a $2\otimes 2$ state. It is known from
  Ref.~\cite{Sanpera1997} that any two dimensional subspace of $2\otimes 2$ system  always contains a product
  vector. If $\rank{Y_{21},T_2Y_{21}} = 2$, then $\rho'$ contains a product vector in its range, namely $u$. Moreover,
  $(0,1)\otimes u$ is a product  vector in the range of $\rho$. On the other hand, if
  $\rank{Y_{21},T_2Y_{21}} = 1$, we claim that $\rho$ also contains a product vector in its range. Consider  the SVD
  decomposition of $Y_{21}$, denoted by $Y_{21} = U\Sigma \adj V$. Then the
  $5,6$-th rows of $X$ is
  \begin{equation*}
    U
    \begin{pmatrix}
     0&0& 0 & 0 & \sigma & 0 & t_1\sigma & 0 \\
     0&0& 0 & 0 & 0      & 0 & t_2\sigma & 0 \\
    \end{pmatrix}\adj V,
  \end{equation*}
  where
  \begin{equation*}
    \Sigma =
    \begin{pmatrix}
      \sigma & 0 \\
      0 & 0
    \end{pmatrix}, \adj U T_2 U =
    \begin{pmatrix}
      t_1& t_3\\
      t_2 & t_4
    \end{pmatrix}.
  \end{equation*}
  Note that  $( 0 , 0,0,0 , \sigma, 0 , t_1\sigma , 0)$ is always a product vector for any $t_1$.
  
  Similar way, we can show that $\rho$  contains a product vector in its range if $Y_{12}\neq 0$.

  However, $\rank{\rho}=4$ contradicts with  $Y_{12}=0,Y_{21}=0, \text{ and } Y_{22}=0$. It follows than $\rho$ is separable.

  Next, we consider the case when $\rho$ is in $3\otimes 3$ bipartite system.  Let $\rho$ be an SPPT state with
  \begin{equation*}
    \begin{split}
      \rho& = \adj{X}{X},\\
      X & =
      \begin{pmatrix}
        X_1 & S_{12} X_1 & S_{13}X_1 \\
        0 &X_2 &S_{23 }X_2\\
        0 & 0 & X_3
      \end{pmatrix}.
    \end{split}
  \end{equation*}
  If $X_3\neq 0 $, then $\rho$ contains a product in its range.

  Now assume  $X_3 =0$. If $X_1 $ has full rank, then $(X_2,S_{23}X_2)$ must be  rank one.  Suppose
  \begin{equation*}
    X_2 =
    \begin{pmatrix}
      \lambda_1 a\\
      \lambda_2 a\\
      \lambda_3 a
    \end{pmatrix},
  \end{equation*}
  where $a$ is a row vector in $\complex^3$. The row range of $S_{23}X_2$ will be contained in that of $X_2$, 
  we have
  \begin{equation*}
    S_{23}X_2 =
    \begin{pmatrix}
      \sigma_1 a\\
      \sigma_2 a\\
      \sigma_3 a\\
    \end{pmatrix},
  \end{equation*}
  for some $\sigma_i$.
  
  However  $(0,\lambda_i,\sigma_i)\otimes a$ is a product vector in the row range of $\rho$.

  Therefore, it only remains to consider the case when $\rank{X_1} = 2$.

  Since the rank of $\rho$ is 4, then $1\leqslant \rank{X_2}\leqslant
  2$. It has been proved that 
  $\rank{X_2}=1$ implies the separability of $\rho$, we  need only  to  consider the case when $\rank{X_2} = 2$.

  Let $\sigma = \adj{( 0,X_2,S_{23}X_2)}{(0,X_2,S_{23}X_2)}$ which is supported in $2\otimes 2$ subspace. Since any two
  dimensional subspace in $2\otimes 2$ system contains at least one product vector,  $\rho$ must contain a product vector
  in its  range.

  Above all, we conclude that any rank 4 SPPT state is separable.
\end{proof}

\section{Conclusion}%
\label{conclusion}
We extend the concept of well-known SPPT states to the arbitrary n-body system.  We compare the difference between the
definition of SPPT in  Ref.~\cite{SPPT3partite} and ours.  It turns out that our states 
can inherit the structure of PPT and many good properties as those in the bipartite system. For example, any  SPPT states are
separable, pure states are separable if and only if  they are SPPT, and any SPPT state of rank 4 is separable. Besides, we
also give some  sufficient  conditions for separability of SPPT states. In particular, for the $2\otimes 5$ SPPT states,
we showed that 
most of the states are separable except a special subclass.  We hope our work will be helpful for investigating the
structure of 
multipartite PPT states.

\begin{acknowledgments}
   The author is supported by an NUS Research Scholarship.
The author would like to thank the support and supervision  from Professor Chu Delin. The author also gratefully acknowledges Mr. Cui Hanwen for
the checking of the final script. Finally, the author would specifically like to highlight the impetus behind this work by Ms. Liu Cuizhen.
\end{acknowledgments}

\end{document}